\let\oldtocsection=\tocsection
\let\oldtocsubsection=\tocsubsection
\let\oldtocsubsubsection=\tocsubsubsection
\renewcommand{\tocsection}[2]{\hspace{0em}\oldtocsection{#1}{#2}}
\renewcommand{\tocsubsection}[2]{\hspace{1em}\oldtocsubsection{#1}{#2}}
\renewcommand{\tocsubsubsection}[2]{\hspace{2em}\oldtocsubsubsection{#1}{#2}}
\newtheorem{proposition}{Proposition}
\newtheorem{theorem}[proposition]{Theorem}
\newtheorem{lemma}[proposition]{Lemma}
\newtheorem{define}[proposition]{Definition}
\newtheorem{rem}[proposition]{Remark}
\newcommand{\Z}{\mathbb{Z}}
\newcommand{\R}{\mathbb{R}}
\newcommand{\C}{\mathbb{C}}
\newcommand{\Q}{\mathbb{Q}}
\newcommand{\U}{\mathbb{U}}
\newcommand{\xyC}[1]{%
\makeatletter
\xydef@\xymatrixcolsep@{#1}
\makeatother
} 
\newcommand{\xyR}[1]{%
\makeatletter
\xydef@\xymatrixrowsep@{#1}
\makeatother
} 
\def\Xint#1{\mathchoice
{\XXint\displaystyle\textstyle{#1}}%
{\XXint\textstyle\scriptstyle{#1}}%
{\XXint\scriptstyle\scriptscriptstyle{#1}}%
{\XXint\scriptscriptstyle\scriptscriptstyle{#1}}%
\!\int}
\def\XXint#1#2#3{{\setbox0=\hbox{$#1{#2#3}{\int}$}
\vcenter{\hbox{$#2#3$}}\kern-.5\wd0}}
\def\ddashint{\Xint=}
\begin{document}
\title{Localization In Abelian Chern-Simons Theory}

\author{B.D.K. McLellan}

\begin{abstract}
Chern-Simons theory on a closed contact three-manifold is studied when the Lie group for gauge transformations is compact, connected and abelian.  The abelian Chern-Simons partition function is derived using the Faddeev-Popov gauge fixing method.  The partition function is then formally computed using the technique of non-abelian localization.  This study leads to a natural identification of the abelian Reidemeister-Ray-Singer torsion as a specific multiple of the natural unit symplectic volume form on the moduli space of flat abelian connections for the class of Sasakian three-manifolds.
The torsion part of the abelian Chern-Simons partition function is computed explicitly in terms of Seifert data for a given Sasakian three-manifold.
\end{abstract}




\maketitle
\tableofcontents

\section{Introduction}\label{nonabelchap}
The goal of this article is to study the Chern-Simons partition function as a rigorous topological three manifold invariant using heuristic techniques that arise in the physics literature.  Our starting point is a heuristically defined partition function, defined as a Feynman path integral, that physically describes the quantum amplitude of a given three manifold $X$ with respect to the Chern-Simons action.  From a mathematical perspective the partition function is somewhat mysterious and a rigorous, systematic method for its study, in the sense of \emph{constructive quantum field theory}, is currently lacking.  It is remarkable, however, that one can make rigorous mathematical predictions using the partition function.\\
\\
Our motivation for this study is contained primarily in the work of C. Beasley and E. Witten, \cite{bw}, where the Chern-Simons partition function is studied with respect to a contact structure $H\subset TX$ on a three manifold $X$.  The study of Chern-Simons theory with respect to a contact structure is a novel idea that Beasley and Witten introduce in order to explain some empirical observations of L. Rozansky, \cite{roz}.  Rozansky studied Chern-Simons theory on Seifert manifolds and observed that the contributions from irreducible flat connections were finite loop exact.  He then further observed that this was similar to the behaviour of two dimensional Yang-Mills theory, where similar phenomenon are explained by non-abelian localization \cite{w2}.\\
\\
Our main idea in this article is to study the \emph{abelian} partition function with respect to a contact structure following the ideas of \cite{bw}.  We note that the first systematic work on Chern-Simons theory in the physics literature, including in particular the necessary
quantization of its coefficient, may be found in the work of S. Deser, R. Jackiw, G. 't Hooft and S. Templeton \cite{djt}.  The abelian theory is well known and has been studied from several different perspectives.  Recall, $G=\operatorname{U}(1)$ Chern-Simons theory is physically interesting and can be realized as a fundamental building block for a theory of the fractional quantum Hall effect, \cite{wilcz}, \cite{zeefro}.  Schwarz \cite{s1}, \cite{s2} has also shown that the $\operatorname{U}(1)$ theory is related to the Reidemeister-Ray-Singer torsion \cite{reid2}, \cite{rsi}, a classical topological invariant of three manifolds.  Abelian Chern-Simons theory is also closely related to the one loop contribution of Beasley and Witten's work, which is also a motivation for this study.\\
\\
Recall that Witten has shown \cite{w3} that the one loop contribution to the partition function requires some of the work of Atiyah, Patodi and Singer, \cite{aps1}, \cite{aps2}, \cite{aps3}, in order to extract a topological invariant of a three manifold.  A generalization of our study will naturally involve making sense of an analogue of some of the work of Atiyah, Patodi, and Singer for three manifolds relative to an arbitrary contact structure, going beyond the Seifert case.\\
\\
Recall, \cite{bw} studies the Chern-Simons partition function, \cite[Eq. 3.1]{bw}, which is heuristically defined as follows,
\begin{equation}\label{orgchern}
Z(k):=\frac{1}{\operatorname{Vol}(\mathcal{G})}\left(\frac{k}{4\pi^2}\right)^{\Delta{\mathcal{G}}} \ddashint \mathcal{D}A\,\,e^{ik\operatorname{CS}(A)}.
\end{equation}
\begin{rem}
The notation, $\ddashint$, is introduced to explicitly distinguish a ``path integral'' from ordinary integration.  The notation, $\ddashint$, describes an ``integral'' over the the ``space of connections,'' and serves as a heuristic device that is generally non-rigorous.  Quantities involving integrals over the moduli space of flat connections $\mathcal{M}_{P}$ are rigorously defined and derived from the former explicitly in this article.
\end{rem}
In general, the partition function of Eq. \eqref{orgchern} is not known to admit a general mathematical interpretation in terms of the cohomology of some classical moduli space of connections, in contrast to Yang-Mills theory for example, \cite{w2}.  The main result of \cite{bw}, however, is that if $X$ is assumed to carry the additional geometric structure of a Seifert manifold, then the partition function of Eq. \eqref{orgchern} \emph{does} admit a more conventional interpretation in terms of the cohomology of some classical moduli space of connections.
Using the additional Seifert structure on $X$, \cite{bw} decouple one of the components of a gauge field $A$, and introduce a ``new'' partition function denoted by $\bar{Z}(k)$ and given as \cite[Eq. 3.7]{bw},
\begin{equation}\label{newchern}
K\cdot\ddashint \mathcal{D}A\mathcal{D}\Phi\,\,\operatorname{exp}\left[ik\left(\operatorname{CS}(A)-\frac{1}{4\pi}\int_{X}2\kappa\wedge\operatorname{Tr}(\Phi F_{A})+\frac{1}{4\pi}\int_{X}\kappa\wedge d\kappa\,\,\operatorname{Tr}(\Phi^{2})\right)\right],
\end{equation}
where the basic ingredients in this expression are given in \S\ref{shiftsec}.  \cite{bw} give a heuristic argument showing that the partition function computed using the alternative description of Eq. \eqref{newchern} should be the same as the Chern-Simons partition function of Eq. \eqref{orgchern}.  In essence, they show, \cite[pg.13]{bw},
\begin{equation}\label{maint}
Z(k)=\bar{Z}(k),
\end{equation}
by gauge fixing $\Phi=0$ using the shift symmetry.  \cite{bw} then observe that the $\Phi$ dependence in the integral can be eliminated by simply performing the Gaussian integral over $\Phi$ in Eq. \eqref{newchern} directly.  They obtain the alternative formulation,
\begin{equation}\label{newchern2}
Z(k)=\bar{Z}(k)=K'\cdot\ddashint \mathcal{D}A\,\,\operatorname{exp}\left[ik\left(\operatorname{CS}(A)-\frac{1}{4\pi}\int_{X}\frac{1}{\kappa\wedge d\kappa}\,\,\operatorname{Tr}\left[(\kappa\wedge F_{A})^{2}\right]\right)\right],
\end{equation}
where $K':=\frac{1}{\operatorname{Vol}(\mathcal{G})}\frac{1}{\operatorname{Vol}(\mathcal{S})}\left(\frac{-ik}{4\pi^2}\right)^{\Delta{\mathcal{G}}/2}$.  Note that we follow \cite[Eq. 3.9]{bw} here and abuse notation slightly by writing $\frac{1}{\kappa\wedge d\kappa}$.  We have done this with the understanding that since $\kappa\wedge d\kappa$ is non-vanishing (since $\kappa$ is a contact form), then $\kappa\wedge F_{A}=\phi\,\kappa\wedge d\kappa$ for some function $\phi\in \Omega^{0}(X,\frak{g})$, and we identify $\frac{\kappa\wedge F_{A}}{\kappa\wedge d\kappa}:=\phi$.\\
\\
The original argument of \cite{bw} was to decouple one of the components of the gauge field $A\in \mathcal{A}_{P}$ by introducing a \emph{local shift symmetry} (see \cite[\S3.1]{bw}) and then to translate the Chern-Simons partition function into a ``moment map squared'' form using this symmetry.  The general ``moment map squared'' form for the partition function is a symplectic integral of the canonical form,
\begin{equation}\label{canonform}
\bar{Z}(\epsilon)=\frac{1}{\operatorname{Vol}(H)}\left(\frac{1}{2\pi\epsilon}\right)^{\Delta_{H}/2}\ddashint_{Y}\exp\left[\Omega-\frac{1}{2\epsilon}(\mu,\mu)\right],
\end{equation}
where $Y$ is a symplectic manifold with symplectic form $\Omega$, and $H$ is a Lie group that acts on $Y$ in a Hamiltonian fashion with moment map $\mu$.  $\Delta_{H}=\operatorname{dim}(H)$ and $\epsilon=\frac{2\pi}{k}$.  The technique of non-abelian localization \cite{w2} can then be applied to study such integrals.  This article studies the analogous theory in the case of a compact, connected abelian Lie group $G$.\\
\\
The main goal of this article is to study an analogue of the partition function in \eqref{newchern} for the case of a compact, connected and abelian structure group, and to use the equivalence \eqref{maint} to compute the abelian partition function.
\\
\\
In \S \ref{u1partsec} we derive a definition of the abelian partition function, $Z_{\mathbb{T}}(X,P,k)$, which is the abelian analogue of \eqref{orgchern} and is the main topological invariant studied in this article.  Our derivation starts with a heuristically defined partition function, as introduced in remark \ref{heurdef},
$$Z_{\mathbb{T}}(X,P,k)=\frac{1}{\operatorname{Vol}(\mathcal{G})} \ddashint_{\mathcal{A}_{P}} \mathcal{D}A\, e^{i k \operatorname{CS}_{X,P}(A_{P})},$$
where we abuse notation and write $Z_{\mathbb{T}}(X,P,k)$ for both the heuristic and rigorous versions of the partition function.  Our method uses the ``Faddeev-Popov gauge fixing method,'' as introduced in \cite{fadpop}, to extract a rigorously defined topological invariant.  We note that our method differs from previous derivations \cite{s2}, \cite{m} of a rigorous candidate for an abelian Chern-Simons partition function.  Our approach generalizes more readily to non-abelian gauge groups and also leads to a different and more natural $k$-dependence for the rigorous quantity we obtain.  We also note that our approach differs from that of \cite{m} in that we take into account a dependence of the partition function on a choice of two-framing on $X$.  Our final rigorous definition of the abelian Chern-Simons partition function is given in definition \ref{rigpartdef}.\\
\\
In \S\ref{shiftsec} we study the ``shift symmetry'' construction introduced in \cite{bw}, and apply this to the heuristic abelian partition function.  The main result of this section is a ``new'' heuristic definition of what we call the \emph{shift reduced abelian Chern-Simons partition function} given in equation \eqref{Anom12}.  In \S\ref{momsquared} our main objective is to present the shift reduced abelian Chern-Simons partition function in the canonical moment map squared form as in \eqref{canonform}.  Following the basic argument of \cite{bw}, we are able to obtain this result in equation \eqref{Anom1234}.  This allows us to formally apply the method of non-abelian localization to the heuristic path integral in \S\ref{locabelchern} to finally obtain a ``new'' rigorous definition of the partition function as in definition \ref{symrigdef}.\\
\\
In summary, we make two rigorous definitions in the article, the \emph{abelian Chern-Simons partition function} $Z_{\mathbb{T}}(X,k)$ in definition \ref{rigpartdef} and the \emph{symplectic abelian Chern-Simons partition function} $\bar{Z}_{\mathbb{T}}(X,k)$ in definition \ref{symrigdef}, using heuristic techniques.  We find that both $Z_{\mathbb{T}}(X,k)$ and $\bar{Z}_{\mathbb{T}}(X,k)$ have identical $k$-dependent terms $k^{m_{X}}$.  We note that our $k$-dependence differs from that in \cite{m} since we take into account the $k$-dependence due to the isotropy group $I$.  Our computation physically identifies the volume form dependent parts of $Z_{\mathbb{T}}(X,k)$ and $\bar{Z}_{\mathbb{T}}(X,k)$.  Futhermore, one can prove the following,
\begin{theorem}\label{moduliprop2}
Given a closed, oriented Seifert three manifold $X$ such that $c_{1}(X)\neq 0$ (where $c_{1}(X)$ denotes the first orbifold Chern number of $X$) then,
$$\mathcal{M}_{X}\simeq \mathbb{T}^{2g}\times \operatorname{Tors}(H^{2}(X,\Lambda))\simeq \operatorname{Hom}(\pi_{1}(X),\mathbb{T}),$$
where, $|\operatorname{Tors}H^{2}(X,\Lambda)|=|c_{1}(X)\cdot\prod_{j=1}^{M}\alpha_{j}|^{N}$, and $\Lambda$ denotes the integral lattice of the Lie group $\mathbb{T}$.
\end{theorem}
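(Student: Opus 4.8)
The plan is to prove the two isomorphisms separately and then extract the order of the torsion from a presentation of $\pi_1(X)$. For the second isomorphism I would invoke the standard holonomy correspondence: for a compact connected abelian group $\mathbb{T}$ the moduli space of flat $\mathbb{T}$-connections $\mathcal{M}_X$ is identified with the representation variety $\operatorname{Hom}(\pi_1(X),\mathbb{T})$, and since $\mathbb{T}$ is abelian the conjugation action is trivial, so no quotient is taken and the identification is an isomorphism of spaces (indeed of topological groups). Because $\mathbb{T}$ is abelian, every homomorphism out of $\pi_1(X)$ factors through the abelianization, giving $\operatorname{Hom}(\pi_1(X),\mathbb{T})\cong\operatorname{Hom}(H_1(X;\Z),\mathbb{T})$. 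This reduces the whole problem to computing $H_1(X;\Z)$.

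Next I would compute $H_1(X;\Z)$ from the standard Seifert presentation of $\pi_1(X)$: generators $a_i,b_i$ ($1\le i\le g$) from the genus-$g$ base orbifold, generators $x_1,\dots,x_M$ around the $M$ exceptional fibers, a central fiber class $h$, with relations $x_j^{\alpha_j}h^{\beta_j}=1$ and $\prod_i[a_i,b_i]\prod_j x_j=h^{-b}$. Abelianizing kills the commutators and yields $H_1(X;\Z)\cong\Z^{2g}\oplus A$, where $A$ is presented on $x_1,\dots,x_M,h$ by the $(M+1)\times(M+1)$ integer matrix whose top-left block is $\operatorname{diag}(\alpha_1,\dots,\alpha_M)$, whose last column is $(\beta_1,\dots,\beta_M,b)^{t}$, and whose last row is $(1,\dots,1,b)$. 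Cofactor expansion along the last column shows its determinant equals, up to sign, $\big(\prod_j\alpha_j\big)\cdot e(X)$, where $e(X)=b+\sum_j\beta_j/\alpha_j$ is the Seifert Euler number, which I identify with $\pm c_1(X)$. Thus the determinant is $\pm\,c_1(X)\prod_j\alpha_j$, which is nonzero exactly under the hypothesis $c_1(X)\ne 0$; in that case $A=T:=\operatorname{Tors}(H_1(X;\Z))$ is finite of order $|c_1(X)\prod_j\alpha_j|$, and $H_1(X;\Z)\cong\Z^{2g}\oplus T$.

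Then I would apply $\operatorname{Hom}(-,\mathbb{T})$. Choosing a basis for $\Lambda$ to write $\mathbb{T}\cong(\R/\Z)^N$, one gets $\operatorname{Hom}(\Z^{2g},\mathbb{T})\cong\mathbb{T}^{2g}$, while for the finite part $\operatorname{Hom}(T,\mathbb{T})\cong\operatorname{Hom}(T,\R/\Z)^N\cong T^N$ by Pontryagin duality for finite abelian groups. To rewrite $T^N$ in the stated topological form I would use Poincaré duality on the closed oriented three-manifold $X$, giving $\operatorname{Tors}H^2(X;\Z)\cong\operatorname{Tors}H_1(X;\Z)=T$, together with the universal coefficient isomorphism $H^2(X;\Lambda)\cong H^2(X;\Z)\otimes_{\Z}\Lambda$, valid since $\Lambda$ is free; passing to torsion gives $\operatorname{Tors}H^2(X;\Lambda)\cong T^N$. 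Assembling these identifications yields $\mathcal{M}_X\cong\mathbb{T}^{2g}\times\operatorname{Tors}H^2(X;\Lambda)$, and the cardinality count $|\operatorname{Tors}H^2(X;\Lambda)|=|T|^N=|c_1(X)\prod_j\alpha_j|^N$.

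The main obstacle I anticipate is the bookkeeping in the homology computation: pinning down the sign and normalization conventions so that the determinant of the abelianized presentation matrix matches $c_1(X)\prod_j\alpha_j$ precisely, and verifying that $c_1(X)\ne 0$ is exactly the nondegeneracy condition forcing $A$ to be pure torsion rather than contributing extra free rank. The remaining homological-algebra inputs (Pontryagin duality, Poincaré duality, and universal coefficients) are routine once the integral homology $H_1(X;\Z)$ is in hand.
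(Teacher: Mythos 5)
Your proposal is correct, but note that the paper itself never proves Theorem \ref{moduliprop2}: it is stated in the introduction with ``one can prove the following'' and then used, so there is no argument in the text to compare yours against step by step. The surrounding discussion indicates the author has in mind the bundle-by-bundle Hodge-theoretic picture — each $\mathcal{M}_{P}\simeq H^{1}(X,\frak{t})/H^{1}(X,\Lambda)\simeq\mathbb{T}^{2g}$, with the components of $\mathcal{M}_{X}$ indexed by the flat bundle classes $[P]\in\operatorname{Tors}H^{2}(X,\Lambda)$ and the order of the torsion read off from Seifert data. Your route — the holonomy identification $\mathcal{M}_{X}\cong\operatorname{Hom}(\pi_{1}(X),\mathbb{T})\cong\operatorname{Hom}(H_{1}(X;\Z),\mathbb{T})$, abelianization of the Seifert presentation, then Pontryagin duality, Poincar\'e duality and universal coefficients to rewrite $T^{N}$ as $\operatorname{Tors}H^{2}(X,\Lambda)$ — establishes both claimed isomorphisms simultaneously and supplies the proof the paper omits; each step is sound: the presentation matrix is the right one, a cokernel presented by a square integer matrix with nonzero determinant is finite of order $|\det|$, and $c_{1}(X)\neq 0$ is exactly the condition that prevents extra free rank, so that $b_{1}(X)=2g$.

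One internal inconsistency is worth fixing, though it washes out of the final answer. With the relations exactly as you wrote them ($x_{j}^{\alpha_{j}}h^{\beta_{j}}=1$ and $\prod_{i}[a_{i},b_{i}]\prod_{j}x_{j}=h^{-b}$), the determinant of your matrix is
\begin{equation*}
\Bigl(\prod_{j}\alpha_{j}\Bigr)\Bigl(b-\sum_{j}\beta_{j}/\alpha_{j}\Bigr),
\end{equation*}
not $\pm\bigl(\prod_{j}\alpha_{j}\bigr)\bigl(b+\sum_{j}\beta_{j}/\alpha_{j}\bigr)$; the Euler-number formula $e(X)=\pm\bigl(b+\sum_{j}\beta_{j}/\alpha_{j}\bigr)$ belongs to the convention in which the last relation reads $\prod_{i}[a_{i},b_{i}]\prod_{j}x_{j}=h^{b}$ (or in which the $\beta_{j}$ enter the exceptional-fiber relations with the opposite sign). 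Since only $|e(X)|=|c_{1}(X)|$ enters the count $|\operatorname{Tors}H^{2}(X,\Lambda)|=|c_{1}(X)\prod_{j}\alpha_{j}|^{N}$, the theorem is unaffected, but in a written version you should fix one convention and carry it through consistently rather than appeal to ``up to sign.''
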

\noindent
Using theorem \ref{moduliprop2}, one may compute,
\begin{equation}
\sqrt{T_{X}}=\frac{\omega_{P}}{|c_{1}(X)\cdot \prod_{i}\alpha_{i}|^{N/2}}=\frac{\omega_{P}}{\sqrt{|\operatorname{Tors}H^{2}(X,\Lambda)|}}.
\end{equation}
We also note that previous work also identifies the eta-invariant dependent parts of $Z_{\mathbb{T}}(X,k)$ and $\bar{Z}_{\mathbb{T}}(X,k)$.  This argument is summarized in \cite{jm2} and uses the main result of \cite{mcl2}.  We note that the main result of \cite{mcl2} assumes a natural choice of \emph{Seifert two-framing} on $X$ and does not study how the partition function changes under a change in two-framing associated to a change in the underlying contact structure.  We leave this to future work.\\
\\
Overall, this article mathematically defines the quantities $Z_{\mathbb{T}}(X,k)$, $\bar{Z}_{\mathbb{T}}(X,k)$, and \emph{physically} computes the magnitude,
\begin{equation}
\left|Z_{\mathbb{T}}(X,k)\right|=k^{m_X}\cdot \frac{\left|\sum_{[P]\in\operatorname{Tors}H^{2}(X,\Lambda)}e^{i k \operatorname{CS}_{X,P}(A_{P})}\right|}{\sqrt{|\operatorname{Tors}H^{2}(X,\Lambda)|}}.
\end{equation}
Lastly, we note that it would be of interest to compute the quantity,
\begin{equation}
\left|\sum_{[P]\in\operatorname{Tors}H^{2}(X,\Lambda)}e^{i k \operatorname{CS}_{X,P}(A_{P})}\right|,
\end{equation}
explicitly in terms of Seifert data on $X$, and indeed it would be interesting to make an explicit computation of this quantity on a general closed three-manifold.

\section{The Abelian Partition Function}\label{u1partsec}
In this section we define a partition function, $Z_{\mathbb{T}}(X,k)$, for abelian Chern-Simons theory.  A closely related partition function is studied in \cite{m}, where it shown that it defines a unitary topological quantum field theory as defined by Atiyah in \cite{attqft}.  Our definition of the abelian Chern-Simons partition function differs from \cite[Eq. 7.28]{m} in that we take into account a dependence of the partition function on a choice of two-framing on $X$.  We follow \cite{w3} and revise the definition of \cite{m} by adding a ``counterterm,'' the gravitational Chern-Simons term, to the eta-invariant that shows up in our considerations.  By an Atiyah-Patodi-Singer theorem \cite[Prop. 4.19]{aps2}, this counterterm effectively restores topological invariance for the partition function.  We also choose a different $k$-dependence for the partition function than \cite{m} in order to reflect a dependence of the isotropy group on $k$.\\
\\
Before we define the partition function, we establish some notation and terminology.  Let $\mathbb{T}$ denote a compact, connected abelian Lie group of dimension $N$, $\frak{t}$ denote its Lie algebra and $\Lambda\subset \frak{t}$ the integral lattice.  Let $\operatorname{Tors}H^{2}(X,\Lambda)$ denote the torsion subgroup of $H^{2}(X,\Lambda)$.  $\mathcal{A}_{P}$ is the affine space of connections on $P$ modeled on the vector space $\Omega^{1}(X,\frak{t})$.  $\mathcal{G}:=\operatorname{Map}(X,\mathbb{T})$ is the group of gauge transformations and acts on $\mathcal{A}_{P}$ in the standard way.  That is, for $g\in\operatorname{Map}(X,\mathbb{T})$, and $A_{P}\in \mathcal{A}_{P}$, $A_{P}\cdot g:=A_{P}+g^{*}\vartheta$, where $\vartheta\in \Omega^{1}(\mathbb{T},\frak{t})$ denotes the Maurer-Cartan form on $\mathbb{T}$.  $\operatorname{CS}_{X,P}(A_{P})$ is the Chern-Simons functional of a $\mathbb{T}$-connection $A_{P}$ on $P\rightarrow X$ and we describe this presently.  For any $\mathbb{T}$-connection $A_{P}\in\mathcal{A}_{P}$, we define an $\operatorname{SU}(N+1)$-connection $\hat{A}_{P}$ on an associated principal $\operatorname{SU}(N+1)$-bundle,
\begin{equation}\label{hatbundle}
\hat{P}=P\times_{\mathbb{T}} \operatorname{SU}(N+1),
\end{equation}
via, $$\hat{A}_{P}|_{[p,h]}=\operatorname{Ad}_{h^{-1}}(\iota_{*}\operatorname{pr}_{1}^{*}A_{P}|_{p})+\operatorname{pr}_{2}^{*}\vartheta_{h},$$ where $\iota:\mathbb{T}\rightarrow \operatorname{SU}(N+1)$ is inclusion as a maximal torus, $\operatorname{pr}_{1}:P\times \operatorname{SU}(N+1)\rightarrow P$ and $\operatorname{pr}_{2}:P\times \operatorname{SU}(N+1)\rightarrow \operatorname{SU}(N+1)$ are the standard projections.
Since for any three manifold $X$, $\hat{P}$ is trivializable, let $\hat{s}:X\rightarrow \hat{P}$ be a global section.  The definition we use for the Chern-Simons action, $\operatorname{CS}_{X,P}(A_{P})$, is as follows,
\begin{define}\label{csactiondef}
The Chern-Simons action functional of a $\mathbb{T}$-connection $A_{P}\in\mathcal{A}_{P}$ is defined by,
\begin{equation}\label{act}
\operatorname{CS}_{X,P}(A_{P}):=\frac{1}{4\pi}\int_{X}\hat{s}^{*}\alpha(\hat{A}_{P})\,\,\,\text{mod}\,\, (2\pi\Z),
\end{equation}
where $\alpha(\hat{A}_{P})\in\Omega^{3}(\hat{P},\R)$ is the Chern-Simons form of the induced $\operatorname{SU}(N+1)$-connection $\hat{A}_{P}\in\mathcal{A}_{\hat{P}}$,
\begin{equation}
\alpha(\hat{A}_{P}):=\operatorname{Tr}(\hat{A}_{P}\wedge F_{\hat{A}_{P}})-\frac{1}{6}\operatorname{Tr}(\hat{A}_{P}\wedge [\hat{A}_{P},\hat{A}_{P}]),
\end{equation}
where $\operatorname{Tr}:\frak{su}(N+1)\otimes\frak{su}(N+1)\rightarrow \R$ denotes the standard $\operatorname{Ad}$-invariant bilinear form in the $(N+1)$-dimensional representation.
\end{define}
\begin{rem}\label{heurdef}
One may then \emph{heuristically} define a ``partition function'' as follows.
Let $k\in\Z$ and $X$ a closed, oriented three-manifold.  The \emph{abelian Chern-Simons partition function}, $Z_{\mathbb{T}}(X,k)$, is the heuristic quantity,
\begin{equation}\label{hpart1}
Z_{\mathbb{T}}(X,k)=\sum_{P\in\operatorname{Tors}H^{2}(X,\Lambda)}Z_{\mathbb{T}}(X,P,k),
\end{equation}
and,
\begin{equation}\label{hpart2}
Z_{\mathbb{T}}(X,P,k)=\frac{1}{\operatorname{Vol}(\mathcal{G})} \ddashint_{\mathcal{A}_{P}} \mathcal{D}A\, e^{i k \operatorname{CS}_{X,P}(A_{P})}.
\end{equation}
Note that $\operatorname{Vol}(\mathcal{G})$ formally denotes the volume of the gauge group.
\end{rem}
In our heuristic definition of the abelian partition in the above remark \eqref{heurdef} we sum over flat bundle classes corresponding to elements of $\operatorname{Tors}H^{2}(X,\Lambda)$ because these stationary points of the Chern-Simons action are the flat connections and these are precisely the bundles that admit flat connections.  Note that Eq. \eqref{hpart2} is a formal expression, where we heuristically assume the existence of the measure $\mathcal{D}A$.  It is precisely the quantity $\frac{\mathcal{D}A}{\operatorname{Vol}(\mathcal{G})}$ in Eq. \eqref{hpart2} that is \emph{not} well defined.  Our goal in this section is to make definition \eqref{heurdef} rigorous using the Faddeev-Popov method \cite{fadpop}.
We recall the main ingredients that go into the heuristic definition of the partition function in \eqref{heurdef} above.  First, the measure $\mathcal{D}A$ is \emph{formally} induced by a choice of metric $\operatorname{g}$ on $X$.
Let $\langle\cdot,\cdot\rangle:\frak{t}\otimes\frak{t}\rightarrow\R$ be the bilinear form on $\mathbb{T}$ induced by $\operatorname{Tr}$ as in definition \ref{csactiondef}.  Then $\operatorname{g}$ defines the Hodge star operator, $\star$, on the tangent space $T_{A_{P}}\mathcal{A}_{P}\simeq \Omega^{1}(X,\frak{t})$, which in turn induces the $\mathcal{G}$-invariant Riemannian metric,
\begin{equation}\label{inneq1}
\langle A, B\rangle_{L^{2}}:=\int_{X}\langle A\wedge\star B\rangle,
\end{equation}
on $\mathcal{A}_{P}$, for $A,B\in T_{A_{P}}\mathcal{A}_{P}\simeq \Omega^{1}(X,\frak{t})$.
Observe that,
\begin{eqnarray}\label{quadfun}
Z_{\mathbb{T}}(X,P,k)&=&\frac{e^{i k \operatorname{CS}_{X,P}(A_{P})}}{\operatorname{Vol}(\mathcal{G})} \ddashint_{\mathcal{A}_{P}}\mathcal{D}A\,\, \exp\,\left[\frac{i k}{4\pi}\left(\int_{X} \langle A\wedge dA\rangle\right)\right],
\end{eqnarray}
where we rewrite the partition function after identifying $\mathcal{A}_{P}=A_{P}+\Omega^{1}(X,\frak{t})$ for a flat base point $A_{P}$ in $\mathcal{A}_{P}$.
We then use the Faddeev-Popov method \cite{fadpop} to obtain an exact result.  We will gauge fix by choosing a metric $\operatorname{g}$ on $X$ and fix the \emph{Lorenz gauge condition},
\begin{equation}\label{gaugecond1}
\operatorname{C}(A)=d^{\dagger}A=0,
\end{equation}
where on $q$-forms $d^{\dagger}=(-1)^{q+1}\star d \star$ is the adjoint of $d$ defined with respect to the Hodge star $\star$ for the metric $\operatorname{g}$.  We introduce the gauge fixing action,
\begin{equation}
\operatorname{S}_{\operatorname{gauge}}(A,\Psi,\bold{c},\bar{\bold{c}})=\frac{1}{2\pi}\int_{X}\left(\langle d^{\dagger}A\wedge \Psi\rangle+\langle\bar{\bold{c}}d^{\dagger}d\bold{c}\rangle \right),
\end{equation}
where $\Psi\in\Omega^{3}(X,\frak{t})$ is a Lagrange multiplier term that enforces the gauge condition \eqref{gaugecond1}, and $\bold{c}, \bar{\bold{c}}$ are formal anti-commuting Lie algebra valued \emph{ghost fields} that allow one to write the measure fixing determinant of $d^{\dagger}d$ in exponential form.
Let $I$ denote the isotropy subgroup of $\mathcal{G}$ at $A_{P}\in \mathcal{A}_{P}$.  This is the group of constant maps from $X$ to $\mathbb{T}$ since,
\begin{equation}
\theta\in\text{Lie}\,\mathcal{G}:A_{P}\mapsto A_{P}+d\theta,
\end{equation}
I.e. $d\theta=0\Rightarrow \theta=$ constant (we assume that $X$ is connected), and hence $I\simeq \mathbb{T}$.  Let $\operatorname{Vol}I$ be the volume of the isotropy subgroup with respect to the induced measure on $\mathcal{G}$,
\begin{equation}\label{isovol}
\operatorname{Vol}I=[\operatorname{Vol}X]^{N/2}=\left[\int_{X}\star 1\right]^{N/2}.
\end{equation}
Eq. \eqref{isovol} follows from the definition of the invariant metric on the group $\mathcal{G}$ that is induced by the inner product on $\text{Lie}\,\mathcal{G}\simeq \Omega^{0}(X,\frak{t})$ that comes from $\operatorname{g}$,
\begin{equation}
G_{\mathcal{G}}(\theta,\phi):=\int_{X}\langle\theta\wedge\star \phi\rangle,
\end{equation}
where $\theta,\phi\in \text{Lie}\,\mathcal{G}\simeq \Omega^{0}(X,\frak{t})$.
Observe that $G_{\mathcal{G}}$ restricted to the space of constant functions is simply a scalar multiple of $\langle\cdot,\cdot\rangle_{L^{2}}$ at each $\Psi\in\mathcal{G}\simeq \operatorname{Map}(X,\mathbb{T})$,
\begin{eqnarray*}
G_{\mathcal{G}}(\theta,\phi)|_{\Psi}&=&\int_{X}\langle\theta\wedge\star \phi\rangle,\\
                     &=&\left(\int_{X}\star 1\right)\cdot\langle\theta,\phi\rangle,
\end{eqnarray*}
since $\theta,\phi\in\frak{t}$ are constant.  We may therefore write $\sqrt{G_{\mathcal{G}}}=\left(\int_{X}\star 1\right)^{N/2}$.  If $\sqrt{G_{\mathcal{G}}}D\sigma$ denotes the measure on $I<\mathcal{G}$, then,
\begin{eqnarray}\nonumber
\operatorname{Vol}I&=&\int_{\mathbb{T}}\sqrt{G_{\mathcal{G}}}D\sigma,\\\nonumber
                &=&\sqrt{G_{\mathcal{G}}},\,\text{setting $\int_{I}D\sigma=1$,}\\\label{volres}
                &=&\left[\int_{X}\star 1\right]^{N/2}.
\end{eqnarray}
Observe that in \eqref{quadfun} we may integrate the gauge orbit out and write our integral over the quotient space $\mathcal{A}_{P}/\mathcal{G}$. Note that the metric on $\mathcal{A}_{P}$ formally descends to a metric on the quotient $\mathcal{A}_{P}/\mathcal{G}$, and thereby induces a quotient measure that we denote by $\widehat{\mathcal{D}A}$.  The integral over of the gauge orbit will contribute a factor of,
$$\frac{\operatorname{Vol}\mathcal{G}}{\operatorname{Vol}I},$$
due to the presence of the isotropy group.
We now \emph{define} $Z_{\mathbb{T}}(X,P,k)$ as,
\begin{equation}\label{1looppart}
K(A_{P},k)\cdot\ddashint\mathcal{D}A\mathcal{D}\Psi\mathcal{D}\mathbf{c}\mathcal{D}\bar{\mathbf{c}}\,\,\operatorname{exp}\left[\frac{ik}{4\pi}\int_{X}\langle A\wedge dA\rangle+ik\operatorname{S}_{\operatorname{gauge}}(A,\Psi,\bold{c},\bar{\bold{c}})\right],
\end{equation}
where,
\begin{equation}\label{extrafac}
K(A_{P},k):=\frac{e^{i k \operatorname{CS}_{X,P}(A_{P})}}{\operatorname{Vol}I}\cdot k^{-\frac{1}{2}\operatorname{dim}H^{0}(X,\frak{t})},
\end{equation}
and we have included the factor $k^{-\frac{1}{2}\operatorname{dim}H^{0}(X,\frak{t})}$ in \eqref{extrafac} to take into account the $k$-dependence that occurs in the volume of the isotropy group as in \eqref{volres}.
Let,
$$\operatorname{L}:\Omega^{\bullet}(X,\frak{t})\rightarrow \Omega^{\bullet}(X,\frak{t}),$$
denote the self-adjoint operator defined by,
$$\operatorname{L}:=\star d+d\star,$$
and let $\operatorname{L}^{\operatorname{o}}$ denote the operator $\operatorname{L}$ restricted to the odd forms, $\Omega^{1}(X,\frak{t})\oplus \Omega^{3}(X,\frak{t})$.  Observe that the $(A,\Psi)$ dependent part of the action in \eqref{1looppart} may be expressed as,
$$\int_{X}\langle A\wedge dA+2d^{\dagger}A\cdot \Psi\rangle=\langle (A,\Psi),\operatorname{L}^{\operatorname{o}}(A,\Psi)\rangle_{L^{2}}.$$
Overall, \eqref{1looppart} leads to the following expression for $Z_{\mathbb{T}}(X,P,k)$,
\begin{eqnarray*}\nonumber
&&K(A_{P},k)\cdot\ddashint\mathcal{D}A\mathcal{D}\Psi\mathcal{D}\mathbf{c}\mathcal{D}\bar{\mathbf{c}}\,\,\operatorname{exp}\left[\frac{ik}{4\pi}\langle (A,\Psi),\operatorname{L}^{\operatorname{o}}(A,\Psi)\rangle_{L^{2}}+\frac{ik}{2\pi}\int_{X}\langle\bar{\bold{c}}d^{\dagger}d\bold{c}\rangle\right],\\\nonumber
                     &=&K'(A_{P},k)\cdot\ddashint\mathcal{D}A\mathcal{D}\Psi\,\,\operatorname{exp}\left[\frac{ik}{4\pi}\langle (A,\Psi),\operatorname{L}^{\operatorname{o}}(A,\Psi)\rangle_{L^{2}}\right]\operatorname{det}'\left[d^{\dagger}d\right],
\end{eqnarray*}
where $\mathbf{c},\bar{\mathbf{c}}$ have been integrated out to obtain the last line, and $\operatorname{det}'$ denotes a regularized determinant to be defined later.  As shown in Lemma \ref{scallem} below, the determinant $\operatorname{det}'$ defined in \eqref{regdet1} satisfies the scaling,
\begin{equation}\label{scalingeq}
\operatorname{det}'\left[c\cdot d^{\dagger}d\right]=c^{-\operatorname{dim}H^{0}(X,\frak{t})}\cdot\operatorname{det}'\left[d^{\dagger}d\right],
\end{equation}
for $c\in\R_{+}$.
We have therefore multiplied $K(A_{P},k)$ by the factor,
\begin{equation}\label{newfactor1}
k^{-\operatorname{dim}H^{0}(X,\frak{t})},
\end{equation}
and we have,
\begin{equation}
K'(A_{P},k):=\frac{e^{i k \operatorname{CS}_{X,P}(A_{P})}}{\operatorname{Vol}I}k^{-\frac{3}{2}\operatorname{dim}H^{0}(X,\frak{t})}.
\end{equation}
We may define the path integral,
\begin{equation}\label{alfin}
Z_{\mathbb{T}}(X,P,k)=K'(A_{P},k)\cdot\ddashint\mathcal{D}A\mathcal{D}\Psi\,\,\operatorname{exp}\left[\frac{ik}{4\pi}\langle (A,\Psi),\operatorname{L}^{\operatorname{o}}(A,\Psi)\rangle_{L^{2}}\right]\operatorname{det}'\left[d^{\dagger}d\right],
\end{equation}
formally using stationary phase.  Let $\operatorname{spec}^{*}(\operatorname{L}^{\operatorname{o}})$ denote the non-zero part of the spectrum of $\operatorname{L}^{\operatorname{o}}$ and formally define the signature of $\operatorname{L}^{\operatorname{o}}$,
$$\operatorname{sgn}(\operatorname{L}^{\operatorname{o}}):=\sum_{\lambda\in \operatorname{spec}^{*}(\operatorname{L}^{\operatorname{o}})}\operatorname{sign}\lambda,$$
where $\operatorname{sign}\lambda=\pm 1$ denotes the sign of the real number $\lambda$.  Of course, this expression for the signature of $\operatorname{L}^{\operatorname{o}}$ is not generally well defined and we will regularize using an eta-invariant to obtain something sensible later. 
Thus, applying stationary phase, we obtain,
\begin{equation}\label{stationdef}
Z_{\mathbb{T}}(X,P,k)=K'(A_{P},k)\cdot\int_{\mathcal{M}_{P}}\,\,\frac{1}{\sqrt{|\operatorname{det}'k\operatorname{L}^{\operatorname{o}}|}}\operatorname{exp}\left[\frac{i\pi}{4}\operatorname{sgn}(\operatorname{L}^{\operatorname{o}})\right]\operatorname{det}'\left[d^{\dagger}d\right]\,\nu,
\end{equation}
where $\nu$ denotes the natural measure on the moduli space of flat abelian connections on $P$, $\mathcal{M}_{P}\simeq H^{1}(X,\frak{t})/H^{1}(X,\Lambda)$, and
$\operatorname{det}'\operatorname{L}^{\operatorname{o}}$ is formally the product of non-zero eigenvalues of $\operatorname{L}^{\operatorname{o}}$.  We note that the Lagrange multiplier integration variable $\Psi\in\Omega^{3}(X,\frak{t})$ in \eqref{alfin} is such that $d^{\dagger}\Psi\neq 0$, and accordingly the integral localizes on the critical point set given by $\mathcal{M}_{P}$.  
\begin{rem}
Although the derivation of \eqref{stationdef} is standard, our method is different than that used by \cite{m}, which in turn uses the method of Schwarz \cite{s2}.
\end{rem}
We will define $\operatorname{det}'\operatorname{L}^{\operatorname{o}}$ via regularization using a zeta function determinant.  First, we regularize the signature $\operatorname{sgn}(\operatorname{L}^{\operatorname{o}})$ via the eta-invariant and set $\operatorname{sgn}(\operatorname{L}^{\operatorname{o}})\rightsquigarrow \eta(\operatorname{L}^{\operatorname{o}}):=\eta(\operatorname{L}^{\operatorname{o}})(0)$ where,
\begin{equation}
\eta(\operatorname{L}^{\operatorname{o}})(s):=\sum_{\lambda\in\text{spec}^{*}(\operatorname{L}^{\operatorname{o}})}(\operatorname{sgn}\lambda)|\lambda|^{-s}.
\end{equation}
$\eta(\operatorname{L}^{\operatorname{o}})$ has a rigorous mathematical meaning using the fact that $\eta(\operatorname{L}^{\operatorname{o}})(s)$ admits a meromorphic extension to $\C$ that is regular at $0$, \cite{aps1}.
\begin{rem}
The eta-invariant is an analytic invariant introduced by Atiyah, Patodi and Singer \cite{aps1} defined for an elliptic and self-adjoint operator.  As in \cite[Prop. 4.20]{aps1}, we may remove some spectral symmetry and the eta invariant of $\operatorname{L}^{\operatorname{o}}$ coincides with the eta invariant of the operator $\star d$ restricted to $\Omega^{1}(X,\frak{t})\cap \operatorname{Im}(d\star)$.  Throughout, we will abuse notation slightly and write,
\begin{equation}\label{anometa}
\eta(\star d)=\lim_{s\rightarrow 0}\sum_{\lambda\in\operatorname{spec}^{*}(\star d)}\operatorname{sgn}(\lambda)|\lambda|^{-s},
\end{equation}
and replace $\operatorname{L}^{\operatorname{o}}$ in the notation with $\star d$.
We also recall that the expression for the sum,
\begin{equation}
\sum_{\lambda\in\operatorname{spec}^{*}(\star d)}\operatorname{sgn}(\lambda)|\lambda|^{-s},
\end{equation}
is defined for large $\operatorname{Re}(s)$ and \cite{aps1} shows that it has a meromorphic continuation to $\C$ that is analytic at $0$.  It therefore makes sense to take the limit as $s\rightarrow 0$ in Eq. \eqref{anometa} and to define the eta-invariant $\eta(\star d)$ as evaluation of this limit.
\end{rem}
Let $\eta_{\operatorname{grav}}(\operatorname{g})$ be the eta-invariant for the operator $\star d$ acting on $\Omega^{1}(X,\R)$, so that,
\begin{equation}\label{tetad}
\eta(\star d)=N\cdot\eta_{\operatorname{grav}}(\operatorname{g}),
\end{equation}
where the eta invariant on the left hand side of \eqref{tetad} is defined on $\Omega^{1}(X,\frak{t})$ and $N=\operatorname{dim}\mathbb{T}$.  Since $\eta(\star d)$ itself is not a topological invariant, we follow \cite{w3} and add a ``counterterm'' that cancels the metric dependence of the eta-invariant.
Define,
\begin{equation}\label{gcounter}
\operatorname{CS}_{s}(A^{\operatorname{g}}):=\frac{1}{4\pi}\int_{X}s^{*}\operatorname{Tr}(A^{\operatorname{g}}\wedge dA^{\operatorname{g}}+\frac{2}{3} A^{\operatorname{g}}\wedge A^{\operatorname{g}}\wedge A^{\operatorname{g}}),
\end{equation}
the gravitational Chern-Simons term with $A^{\operatorname{g}}$ the Levi-Civita connection and $s$ a trivializing section of twice the tangent bundle of $X$.  More explicitly, let $H=\operatorname{Spin}(6)$, $Q=TX\oplus TX$ viewed as a principal
$\operatorname{Spin}(6)$-bundle over $X$, $\operatorname{g}\in\Gamma(S^{2}(T^{*}X))$ a Riemannian metric on $X$, $\phi:Q\rightarrow \operatorname{SO}(X)$ a principal bundle morphism, and $A^{LC}\in\mathcal{A}_{SO(X)}:=\{ A\in (\Omega^{1}(\operatorname{SO}(X))\otimes\frak{so}(3))^{\operatorname{SO}(3)}\,\,|\,\,A(\xi^{\sharp})=\xi, \,\,\forall\,\xi\in\frak{so}(3)\}$ the Levi-Civita connection.  Then $A^{\operatorname{g}}:=\phi^{*}A^{LC}\in \mathcal{A}_{Q}:=\{ A\in (\Omega^{1}(Q)\otimes\frak{h})^{H}\,\,|\,\,A(\xi^{\sharp})=\xi, \,\,\forall\,\xi\in\frak{h}\}$.  An Atiyah-Patodi-Singer theorem, \cite[Prop. 4.19]{aps2}, says that the combination,
\begin{equation}\label{apsthm}
\eta_{\operatorname{grav}}(\operatorname{g})+\frac{1}{3}\frac{\operatorname{CS}(A^{\operatorname{g}})}{2\pi},
\end{equation}
is a topological invariant depending only on a $2$-framing of $X$.  Recall that a $2$-framing is a choice of a homotopy equivalence class $\Pi$ of trivializations of $TX\oplus TX$, twice the tangent bundle of $X$.  Note that $\Pi$ is represented by the trivializing section $s:X\rightarrow Q$ above.  The possible $2$-framings correspond to $\Z$.  The identification with $\Z$ is given by the signature defect defined by,
\begin{equation*}
\delta(X,\Pi)=\text{sign}(M)-\frac{1}{6}p_{1}(2TM,\Pi),
\end{equation*}
where $M$ is a $4$-manifold with boundary $X$ and $p_{1}(2TM,\Pi)$ is the relative Pontrjagin number associated to the framing $\Pi$ of the bundle $TX\oplus TX$.  The canonical $2$-framing $\Pi^{c}$ corresponds to $\delta(X,\Pi^{c})=0$.  Thus, overall we replace $\operatorname{sgn}\operatorname{L}^{\operatorname{o}}$ in \eqref{stationdef} with,
\begin{equation}
N\cdot\left[\eta_{\operatorname{grav}}(\operatorname{g})+\frac{1}{3}\frac{\operatorname{CS}(A^{\operatorname{g}})}{2\pi}\right],
\end{equation}
which is a topological invariant up to a choice of two-framing on $X$.\\
\\
Next, we consider the determinant $\operatorname{det}'\operatorname{L}^{\operatorname{o}}$ in \eqref{stationdef}.
Recall the Hodge-de Rham Laplacian,
\begin{equation}\label{harlap}
\Delta_{q}:=d^{\dagger}d+dd^{\dagger},\,\,\,\text{on}\,\,\,\Omega^{q}(X,\frak{t}),
\end{equation}
Let $\zeta_{q}(s)$ denote the zeta function of $\Delta_{q}$,
\begin{equation}
\zeta_{q}(s)=\zeta(\Delta_{q})(s):=\sum_{\lambda\in\text{spec}^{*}(\Delta_{q})}\lambda^{-s}.
\end{equation}
Recall, $\zeta_{q}(s)$ is defined for $\operatorname{Re}(s)\gg0$ by,
\begin{equation}
\zeta_{q}(s):=\frac{1}{\Gamma(s)}\int_{0}^{\infty}t^{s-1}\operatorname{tr}(e^{t\Delta_{q}}-\Pi_{q})dt,
\end{equation}
and then analytically continued to $\C$ as usual.  Note that $\Pi_{q}:\Omega^{q}(\operatorname{M},\rho)\rightarrow \mathcal{H}^{q}(\operatorname{M},\rho)$ is orthogonal projection, and $\Gamma(s)$ is the gamma function,
\begin{equation*}
\Gamma(s)=\int_{0}^{\infty}t^{s-1}e^{-t}dt.
\end{equation*}
The notation $\operatorname{det}'$ refers to a regularized determinant and is defined for the Laplacians $\Delta_{q}$ as,
\begin{equation}\label{regdet1}
\operatorname{det}'(\Delta_{q}):=e^{-\zeta'(\Delta_{q})(0)}.
\end{equation}
The scaling used in \eqref{scalingeq} is a consequence of the following,
\begin{lemma}\label{scallem}
For any $c\in\R_{+}$,
\begin{eqnarray*}\label{forwant}
\operatorname{det}'\left[c\cdot \Delta_{q}\right]&=&c^{\zeta_{q}(0)}\cdot\operatorname{det}'\left[\Delta_{q}\right],\\\label{forwant2}
                                                 &=&c^{-\operatorname{dim}H^{q}(X,\frak{t})}\cdot\operatorname{det}'\left[\Delta_{q}\right].
\end{eqnarray*}
\end{lemma}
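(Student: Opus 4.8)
The plan is to establish the first line by a direct manipulation of the defining zeta function, and then to deduce the second line by identifying $\zeta_q(0)$ with a small-time heat coefficient that vanishes in odd dimensions.

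For the first identity, observe that multiplying $\Delta_q$ by a constant $c\in\R_{+}$ simply rescales each nonzero eigenvalue, so that for $\operatorname{Re}(s)\gg 0$,
\begin{equation*}
\zeta(c\Delta_q)(s)=\sum_{\lambda\in\operatorname{spec}^{*}(\Delta_q)}(c\lambda)^{-s}=c^{-s}\,\zeta_q(s),
\end{equation*}
and this identity persists under meromorphic continuation to $\C$. Differentiating in $s$ and using $\frac{d}{ds}c^{-s}=-(\log c)\,c^{-s}$, then evaluating at $s=0$, gives
\begin{equation*}
\zeta'(c\Delta_q)(0)=-(\log c)\,\zeta_q(0)+\zeta_q'(0).
\end{equation*}
Applying the definition $\operatorname{det}'(\Delta_q)=e^{-\zeta'(\Delta_q)(0)}$ from \eqref{regdet1} and exponentiating yields $\operatorname{det}'[c\,\Delta_q]=e^{(\log c)\zeta_q(0)}\operatorname{det}'[\Delta_q]=c^{\zeta_q(0)}\operatorname{det}'[\Delta_q]$, which is the first line.

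For the second line it then suffices to show $\zeta_q(0)=-\dim H^{q}(X,\frak{t})$. I would write $\zeta_q(s)$ as the Mellin transform $\frac{1}{\Gamma(s)}\int_{0}^{\infty}t^{s-1}(\operatorname{tr}e^{-t\Delta_q}-h_q)\,dt$, where $h_q=\dim\ker\Delta_q$, and split the integral at $t=1$. The tail $\int_{1}^{\infty}$ is entire in $s$ since $\operatorname{tr}e^{-t\Delta_q}-h_q$ decays exponentially, and because $1/\Gamma(s)$ has a simple zero at $s=0$ the tail contributes nothing to $\zeta_q(0)$. On $[0,1]$ I would substitute the short-time expansion $\operatorname{tr}e^{-t\Delta_q}\sim\sum_{k\geq 0}A_k\,t^{k-n/2}$ with $n=\dim X=3$; integrating each power produces simple poles, and the residue of the resulting meromorphic function at $s=0$ is $c_0-h_q$, where $c_0$ denotes the coefficient of $t^{0}$ in the expansion. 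Since $1/\Gamma(s)\sim s$ near $s=0$, this gives $\zeta_q(0)=c_0-h_q$.

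Finally, because $n=3$ is odd, every exponent $k-n/2$ is a half-integer, so no $t^{0}$ term can occur and $c_0=0$; by Hodge theory $h_q=\dim\mathcal{H}^{q}(X,\frak{t})=\dim H^{q}(X,\frak{t})$, whence $\zeta_q(0)=-\dim H^{q}(X,\frak{t})$ and the second line follows. The elementary scaling computation is routine; I expect the genuine content to be the heat-trace analysis yielding $\zeta_q(0)=c_0-h_q$ together with the vanishing $c_0=0$ in odd dimensions, and I would justify the latter by appealing to the standard short-time asymptotics of the heat kernel of a Laplace-type operator on a closed manifold.
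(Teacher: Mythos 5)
Your proof is correct, and it splits into two parts that compare differently with the paper. The scaling computation in your first paragraph is exactly the paper's argument: $\zeta(c\Delta_{q})(s)=c^{-s}\zeta_{q}(s)$, differentiate in $s$, evaluate at $s=0$, and exponentiate using \eqref{regdet1}. The difference is in the second identity: the paper simply cites M\"uller for the fact
\begin{equation*}
\zeta_{q}(0)=-\operatorname{dim}\operatorname{Ker}\Delta_{q}=-\operatorname{dim}H^{q}(X,\frak{t}),
\end{equation*}
whereas you prove it from scratch via the Mellin transform and short-time heat asymptotics. Your analysis is the standard one and is sound: after splitting the integral at $t=1$, the tail is entire and is killed at $s=0$ by the simple zero of $1/\Gamma(s)$; the local part has a simple pole at $s=0$ with residue $c_{0}-h_{q}$, so $\zeta_{q}(0)=c_{0}-h_{q}$; and the constant heat coefficient $c_{0}$ vanishes because the exponents $k-3/2$, $k\in\Z_{\geq 0}$, are never zero on a closed manifold of odd dimension. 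What this buys is a self-contained proof that makes explicit where the odd-dimensionality of $X$ enters --- a point entirely hidden in the paper's citation; on an even-dimensional manifold the formula would pick up the generally nonzero local term $c_{0}$, and the clean scaling in the second line of the lemma would fail. One detail worth stating carefully: the expansion $\operatorname{tr}e^{-t\Delta_{q}}\sim\sum_{k\geq 0}A_{k}t^{k-n/2}$ must run over integer $k$ only (half-integer steps occur only in the presence of a boundary), and that integrality is precisely what forces every exponent to miss zero when $n=3$; since $X$ is closed this is the standard expansion for a Laplace-type operator, so your step is justified.
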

\begin{proof}
By definition, $\zeta(c\Delta_{q})(s)=c^{-s}\zeta(\Delta_{q})(s)$.  Taking the derivative of $c^{-s}\zeta(\Delta_{q})(s)$ with respect to $s$ and evaluating at $s=0$ and using the definition \eqref{regdet1} yields \eqref{forwant}.  In order to obtain the precise scaling in \eqref{forwant2} we use the following \cite{mul},
\begin{equation*}
\zeta_{q}(0)=-\operatorname{dim}\operatorname{Ker}\Delta_{q}=-\operatorname{dim}H^{q}(X,\frak{t}).
\end{equation*}
This completes the proof.
\end{proof}
Now we define the determinant $\operatorname{det}'\operatorname{L}^{\operatorname{o}}$ as,
\begin{eqnarray*}\label{detdef}
\operatorname{det}'\operatorname{L}^{\operatorname{o}}&:=&\left[\operatorname{det}'(\operatorname{L}^{\operatorname{o}})^{2}\right]^{1/2},\\
                                                      &=&\left[\operatorname{det}'(\Delta_{1}\oplus\Delta_{3})\right]^{1/2},\\
                                                      &=&\left[\operatorname{det}'\Delta_{1}\right]^{1/2}\cdot\left[\operatorname{det}'\Delta_{3}\right]^{1/2}.
\end{eqnarray*}
Note that $\Delta_{1}\oplus\Delta_{3}$ denotes the operator acting on $\Omega^{1}(X,\frak{t})\oplus \Omega^{3}(X,\frak{t})$ in the obvious way, preserving the direct sum.  The quantity of interest in equation \eqref{stationdef} is,
\begin{eqnarray*}
\frac{\operatorname{det}'[d^{\dagger}d]}{\sqrt{|\operatorname{det}'[k\operatorname{L}^{\operatorname{o}}]|}}&=&\frac{\operatorname{det}'\Delta_{0}}{\sqrt{|\left[\operatorname{det}'k^{2}\Delta_{1}\right]^{1/2}\cdot\left[\operatorname{det}'k^{2}\Delta_{3}\right]^{1/2}|}},\\
                                                                                                            &=&k^{\frac{1}{2}\left(\operatorname{dim}H^{1}(X,\frak{t})+\operatorname{dim}H^{0}(X,\frak{t})\right)}\cdot\frac{\left[\operatorname{det}'\Delta_{0}\right]^{3/4}}{\left[\operatorname{det}'\Delta_{1}\right]^{1/4}},
\end{eqnarray*}
where the last line follows from Lemma \ref{scallem} and the fact that $\operatorname{det}'\Delta_{0}=\operatorname{det}'\Delta_{3}$ by duality.  Overall, we obtain the following for $Z_{\mathbb{T}}(X,P,k)$,
\begin{eqnarray}
&&K'(A_{P},k)\cdot\int_{\mathcal{M}_{P}}\,\,\frac{1}{\sqrt{|\operatorname{det}'k\operatorname{L}^{\operatorname{o}}|}}\operatorname{exp}\left[\frac{i\pi}{4}\operatorname{sgn}(\operatorname{L}^{\operatorname{o}})\right]\operatorname{det}'\left[d^{\dagger}d\right]\,\nu,\\\label{almostfin}
                     &:=&k^{m_{X}}e^{i k \operatorname{CS}_{X,P}(A_{P})}e^{\left[\frac{i\pi N}{4}\cdot\left[\eta_{\operatorname{grav}}(\operatorname{g})+\frac{1}{3}\frac{\operatorname{CS}(A^{\operatorname{g}})}{2\pi}\right]\right]}\int_{\mathcal{M}_{P}}\,\,\frac{1}{\operatorname{Vol}I}\frac{\left[\operatorname{det}'\Delta_{0}\right]^{3/4}}{\left[\operatorname{det}'\Delta_{1}\right]^{1/4}}\,\nu,
\end{eqnarray}
where $m_{X}:=\frac{1}{2}\left(\operatorname{dim}H^{1}(X,\frak{t})-2\operatorname{dim}H^{0}(X,\frak{t})\right)$.
\begin{rem}
Note that the term $m_{X}:=\frac{1}{2}\left(\operatorname{dim}H^{1}(X,\frak{t})-2\operatorname{dim}H^{0}(X,\frak{t})\right)$ results in a difference in the $k$-dependence of our partition function from that of \cite{m} by a factor of $k^{-\frac{1}{2}\operatorname{dim}H^{0}(X,\frak{t})}$.  We also note that we obtain the same $k$-dependent term $k^{m_{X}}$ in the \emph{symplectic abelian Chern-Simons partition function} given in definition \ref{symrigdef} using the completely different technique of non-abelian localization.
\end{rem}
Next, we will show that the quantity inside the integral in \eqref{almostfin},
$$\frac{1}{\operatorname{Vol}I}\frac{\left[\operatorname{det}'\Delta_{0}\right]^{3/4}}{\left[\operatorname{det}'\Delta_{1}\right]^{1/4}}\,\nu,$$
is precisely the square-root of the Reidemiester-Ray-Singer torsion of $X$.  The Reidemeister-Ray-Singer torsion $T_{X}$ will be defined as a density on the determinant line,
\begin{equation}
|\operatorname{det}H^{\bullet}(X,\frak{t})|^{*}:=\bigotimes_{j=0}^{3}(\operatorname{det}H^{j}(X,\frak{t}))^{(-1)^{j}}.
\end{equation}
We make the natural identification,
\begin{equation}
H^{\bullet}(X,\frak{t})\simeq \mathcal{H}^{\bullet}(X,\frak{t}),
\end{equation}
under the de Rham map where $\mathcal{H}^{\bullet}(X,\frak{t})$ denotes the harmonic forms on $X$ with respect to the Laplacian given in Eq. \eqref{harlap}.
Let $\delta_{|\operatorname{det}H^{\bullet}(X,\frak{t})|^{*}}$ denote the induced density on $|\operatorname{det}H^{\bullet}(X,\frak{t})|^{*}$ corresponding to the induced metric from the $L^{2}$ metric on $\mathcal{H}^{\bullet}(X,\frak{t})$.
Now make the following,
\begin{define}\cite{rsi}\label{scalartors}
Given a closed Riemannian three manifold $(X,\operatorname{g})$, define the \emph{scalar} Reidemeister-Ray-Singer torsion,
\begin{equation}\label{rsdef1}
T_{X}^{\operatorname{scal}}(\operatorname{g}):=\operatorname{exp}\left( \frac{1}{2}\sum_{q=0}^{3}(-1)^{q}q\zeta'(\Delta_{q})(0)\right).
\end{equation}
Define the Reidemeister-Ray-Singer torsion $T_{X}$ as,
\begin{equation}\label{rsdef2}
T_{X}:=T_{X}^{\operatorname{scal}}(\operatorname{g})\cdot \delta_{|\operatorname{det}H^{\bullet}(X,\frak{t})|^{*}}.
\end{equation}
\end{define}
Note that $T_{X}^{\operatorname{scal}}(\operatorname{g})$ is generally dependent upon the choice of metric $\operatorname{g}$ and it is shown in \cite{rsi} that $T_X$ is indeed independent of $\operatorname{g}$.  Note that given an orthonormal basis for $\mathcal{H}^{q}(X,\R)$, $\left\{\nu_{1}^{[q]},\ldots,\nu_{b_{q}}^{[q]}\right\}$, where $b_{q}:=\operatorname{dim}H^{1}(X,\R)$, $\delta_{|\operatorname{det}H^{\bullet}(X,\frak{t})|^{*}}$ may be written as,
\begin{equation}\label{orthoexp}
\delta_{|\operatorname{det}H^{\bullet}(X,\frak{t})|^{*}}=\bigotimes_{q=0}^{3}\left|\nu^{[q]}\right|^{N\cdot(-1)^{q}},
\end{equation}
where $\nu^{[q]}:=\nu_{1}^{[q]}\wedge\cdots\wedge\nu_{b_{q}}^{[q]}$ and $N=\operatorname{dim}\mathbb{T}$.  Observe that an orthonormal basis for $\mathcal{H}^{q}(X,\R)=\R$ is a constant $\nu^{[0]}$ such that $\left|\nu^{[0]}\right|^{N}=\left(\operatorname{Vol}I\right)^{-1}$.  One may see this by computing,
\begin{eqnarray*}
1&=&||\nu^{[0]}||^{2}_{L^{2}},\\
 &=&\int_{X}\nu^{[0]}\wedge \star \nu^{[0]},\\
 &=&|\nu^{[0]}|^{2}\int_{X}\star 1.
\end{eqnarray*}
Combining this with \eqref{volres} one obtains $\left|\nu^{[0]}\right|^{N}=\left(\operatorname{Vol}I\right)^{-1}$.  Using Poincar\'{e} duality $H^{q}(X,\R)\simeq H^{3-q}(X,\R)^{*}$ we may write the square-root of $\delta_{|\operatorname{det}H^{\bullet}(X,\frak{t})|^{*}}$ in \eqref{orthoexp} as,
\begin{eqnarray}\nonumber
\left[\delta_{|\operatorname{det}H^{\bullet}(X,\frak{t})|^{*}}\right]^{1/2}&=&\left|\nu^{[0]}\right|^{N}\otimes\left|\nu^{[1]}\right|^{N},\\\label{lastnu}
                                                                           &=&\frac{1}{\operatorname{Vol}I}\cdot\nu,
\end{eqnarray}
where we define $\nu:=\left|\nu^{[1]}\right|^{N}$.  Thus, using Poincar\'{e} duality combined with the duality $\Delta_{q}\simeq \Delta_{3-q}$ induced by the Hodge star, then the definition of $T_{X}$ in \eqref{rsdef1}, \eqref{rsdef2} and equation \eqref{lastnu} imply that the square-root of the Reidemeister-Ray-Singer torsion can be expressed as,
\begin{equation}
\sqrt{T_{X}}=\frac{1}{\operatorname{Vol}I}\frac{\left[\operatorname{det}'\Delta_{0}\right]^{3/4}}{\left[\operatorname{det}'\Delta_{1}\right]^{1/4}}\,\nu,
\end{equation}
as claimed.
Overall, we make the following,
\begin{define}\label{rigpartdef}
Let $k\in\Z$ and $X$ a closed, oriented three-manifold.  The abelian Chern-Simons partition function, $Z_{\mathbb{T}}(X,k)$, is the quantity,
\begin{equation}
Z_{\mathbb{T}}(X,k)=\sum_{P\in\operatorname{Tors}H^{2}(X,\Lambda)}Z_{\mathbb{T}}(X,P,k),
\end{equation}
and,
\begin{equation}
Z_{\mathbb{T}}(X,P,k):=k^{m_X}e^{i k \operatorname{CS}_{X,P}(A_{P})}e^{\pi i N\left(\frac{\eta_{\operatorname{grav}}(\operatorname{g})}{4}+\frac{1}{12}\frac{\operatorname{CS}(A^{\operatorname{g}})}{2\pi}\right)}\int_{\mathcal{M}_{P}}\sqrt{T_{X}},
\end{equation}
where $m_X=\frac{N}{2}(\operatorname{dim}H^{1}(X,\R)-2\operatorname{dim}H^{0}(X,\R))$.
\end{define}
Note that we can either choose the canonical framing \cite{at} and work with this throughout, or we can observe that if the framing of $X$ is twisted by $F$ units, then $\operatorname{CS}(A^{\operatorname{g}})$ transforms by,
\begin{equation*}
\operatorname{CS}(A^{\operatorname{g}})\rightarrow \operatorname{CS}(A^{\operatorname{g}})+2\pi F.
\end{equation*}
The partition function $Z_{\mathbb{T}}(X,k)$ is then transformed by,
\begin{equation}\label{partform}
Z_{\mathbb{T}}(X,k)\rightarrow Z_{\mathbb{T}}(X,k)\cdot \operatorname{exp}\left(\frac{2\pi i N F}{24}\right).
\end{equation}
Thus, $Z_{\mathbb{T}}(X,k)$ is a topological invariant of framed, oriented three-manifolds, with a transformation law under change of framing.  This is tantamount to a topological invariant of oriented three-manifolds without a choice of framing.

\section{Shift Symmetry and the Abelian Partition Function}\label{shiftsec}
Our goal in this section is to obtain a heuristic ``shift invariant'' expression for the abelian Chern-Simons partition function by decoupling one of the three components of the gauge field $A\in \mathcal{A}_{P}$ using a particular symmetry of our theory.  The symmetry that we exhibit is directly associated to a choice of contact structure on our three manifold $X$ and is called a ``shift symmetry.''  In this section, we will assume that $(X,H)$ is a closed contact three manifold and $\kappa\in \Omega^{1}(X)$ is a contact one form, so that $\operatorname{Ker}(\kappa)=H$.  Note that every closed, orientable three manifold admits a contact structure \cite{ma} and therefore the shift symmetry construction is a general symmetry that applies to any closed, orientable three-manifold.  We note that the constructions in this section are largely heuristic and should be viewed as an \emph{initial} step in obtaining a rigorous definition for a shift invariant expression of the abelian Chern-Simons partition function.  First we make the following,
\begin{define}
The action of the space of local shift symmetries $\mathcal{S}$ on $\mathcal{A}_{P}$ is defined by its variation $\delta_{\sigma}$ on a field $A\in\mathcal{A}_{P}$ by,
\begin{equation*}
\delta_{\sigma} A:=\sigma \kappa,
\end{equation*}
where $\sigma\in \Omega^{0}(X,\frak{t})$ is an arbitrary form and $\kappa\in\Omega^{1}(X)$ is a fixed contact form on $X$.
\end{define}
Note that since the shift symmetry is defined for arbitrary $\sigma\in \Omega^{0}(X,\frak{t})$, it is independent of the choice of $\kappa$ for the contact structure $H\subset TX$ since any two such contact forms must be related by a multiple of a non-vanishing scalar function on $X$.  Clearly, the Chern-Simons action, $\operatorname{CS}_{X,P}(A)$, does \emph{not} respect the shift symmetry.  That is,
\begin{equation}
\delta_{\sigma}\operatorname{CS}_{X,P}(A)\neq 0,
\end{equation}
for arbitrary $\sigma\in\Omega^{0}(X,\frak{t})$.
In order to study a shift invariant version of abelian Chern-Simons theory, we follow \cite[\S3.1]{bw} and introduce a new scalar field $\Phi\in \Omega^{0}(X,\frak{t})$ such that,
\begin{equation*}
\delta_{\sigma} \Phi=\sigma.
\end{equation*}
We postulate the scaling,
\begin{equation*}
\Phi\rightarrow t^{-1}\Phi,
\end{equation*}
for a non-zero function $t\in C^{\infty}(X)$ whenever,
\begin{equation*}
\kappa\rightarrow t\kappa,
\end{equation*}
so that $\kappa\Phi\in \Omega^{1}(X,\frak{t})$ is invariant under the scaling by $t$ and is a well defined form, independent of the choice of $\kappa$.  Then for any principal $\mathbb{T}$-bundle $P$ we define a new action,
\begin{eqnarray}
\operatorname{CS}_{X,P}(A,\Phi)&:=&\operatorname{CS}_{X,P}(A-\kappa\Phi)\nonumber\\
               &:=&\int_{X}\alpha(\widehat{A-\kappa\Phi}),\nonumber\\\label{tecres}
                     &=&\int_{X}\alpha(\hat{A}-\kappa\hat{\Phi}),\\\label{lastshift}
                     &=&\operatorname{CS}_{X,P}(A)-\frac{1}{4\pi}\int_{X}[2\kappa\wedge \operatorname{Tr} (\hat{\Phi}\wedge F_{\hat{A}})-\kappa\wedge d\kappa\,\, \operatorname{Tr} (\hat{\Phi}^{2})],
\end{eqnarray}
where Eq. \eqref{tecres} follows from the definition of $\hat{A}$ and $\hat{\Phi}$, where, $$\hat{\Phi}|_{[p,h]}:=\operatorname{Ad}_{h^{-1}}(\iota_{*}\operatorname{pr}_{1}^{*}\Phi|_{h}),$$
on $\hat{P}=P\times_{\mathbb{T}} SU(N+1)$.  It is easy to see that the new action, $\operatorname{CS}_{X,P}(A,\Phi)$, is invariant under the shift symmetry.  Now define a ``new'' partition function,
\begin{equation}\label{newpart1}
\bar{Z}_{\mathbb{T}}(X,P,k):=\frac{1}{\operatorname{Vol}(\mathcal{S})}\frac{1}{\operatorname{Vol}(\mathcal{G})}\left(\frac{k}{4\pi^2}\right)^{\Delta{\mathcal{G}}} \ddashint_{\mathcal{A}_{P}}\mathcal{D}A\,\mathcal{D}\Phi\,\, e^{i k \operatorname{CS}_{X,P}(A,\Phi)},
\end{equation}
where $\mathcal{D}\Phi$ is defined by the invariant, positive definite quadratic form, \cite[Eq. 3.8]{bw},
\begin{equation}\label{phiprod}
(\Phi,\Phi)=-\int_{X}\langle\Phi,\Phi\rangle\,\kappa\wedge d\kappa.
\end{equation}
As observed in \cite{bw}, the new partition function of Eq. \eqref{newpart1} should be identically equal to the original partition function defined for abelian Chern-Simons theory as in Eq. \eqref{hpart2},
\begin{equation}
Z_{\mathbb{T}}(X,P,k)=\frac{1}{\operatorname{Vol}(\mathcal{G})}\left(\frac{k}{4\pi^2}\right)^{\Delta{\mathcal{G}}} \ddashint_{\mathcal{A}_{P}} \mathcal{D}A\, e^{i k \operatorname{CS}_{X,P}(A)}.
\end{equation}
This is seen by fixing $\Phi=0$ using the shift symmetry, $\delta_{\sigma}\Phi=\sigma$, which will cancel the pre-factor $\operatorname{Vol}(\mathcal{S})$ from the resulting group integral over $\mathcal{S}$ and yield exactly our original partition function,
\begin{equation*}
Z_{\mathbb{T}}(X,P,k)=\frac{1}{\operatorname{Vol}(\mathcal{G})}\left(\frac{k}{4\pi^2}\right)^{\Delta{\mathcal{G}}} \ddashint_{\mathcal{A}_{P}}\mathcal{D}A\,\, e^{i k \operatorname{CS}_{X,P}(A)}.
\end{equation*}
Thus, we obtain the heuristic result,
\begin{equation}\label{equivpart}
\bar{Z}_{\mathbb{T}}(X,k)=Z_{\mathbb{T}}(X,k).
\end{equation}
On the other hand, we obtain another description of $\bar{Z}_{\mathbb{T}}(X,P,k)$ by integrating $\Phi$ out.  Our new description of the partition function is,
\begin{equation}\label{part}
\bar{Z}_{\mathbb{T}}(X,P,k)=C\cdot\ddashint_{\mathcal{A}_{P}}\mathcal{D}A\,\, \exp\,\left[i k\left(\operatorname{CS}_{X,P}(A)-\frac{1}{4\pi}\int_{X} \frac{\operatorname{Tr}[(\kappa\wedge F_{\hat{A}})^{2}]}{\kappa\wedge d\kappa}\right)\right],
\end{equation}
where $C=\frac{1}{\operatorname{Vol}(\mathcal{S})}\frac{1}{\operatorname{Vol}(\mathcal{G})}\left(\frac{-ik}{4\pi^2}\right)^{\Delta{\mathcal{G}/2}}$.  We rewrite this partition function after identifying $\mathcal{A}_{P}=A_{P}+\Omega^{1}(X,\frak{t})$ for a flat base point $A_{P}$ in $\mathcal{A}_{P}$.  We then obtain,
\begin{equation}\label{Anom1a}
\bar{Z}_{\mathbb{T}}(X,P,k)=C_1\cdot\ddashint_{\mathcal{A}_{P}}\mathcal{D}A\,\, \exp\,\left[\frac{i k}{4\pi}\left(\int_{X} \langle A\wedge dA\rangle-\int_{X} \frac{\langle(\kappa\wedge dA)^{2}\rangle}{\kappa\wedge d\kappa}\right)\right],
\end{equation}
where,
$$C_1 = \frac{e^{i k \operatorname{CS}_{X,P}(A_{P})}}{\operatorname{Vol}(\mathcal{S})\operatorname{Vol}(\mathcal{G})} \left(\frac{-ik}{4\pi^2}\right)^{\Delta{\mathcal{G}/2}}.$$
\noindent
Note that the critical points of this action, up to the action of the shift symmetry, are precisely the flat connections, \cite[Eq. 5.3]{bw}.  We abuse notation and write $A\in T_{A_{P}}\mathcal{A}_{P}$.  Let us define the notation,
\begin{equation}\label{newact}
4\pi\overline{\operatorname{CS}}(A):=\int_{X} \langle A\wedge dA\rangle-\int_{X} \frac{\langle(\kappa\wedge dA)^{2}\rangle}{\kappa\wedge d\kappa},
\end{equation}
for the new action that appears in the partition function.  Also define,
\begin{equation}
\overline{\operatorname{S}}(A):=\frac{1}{4\pi}\int_{X} \frac{\langle(\kappa\wedge dA)^{2}\rangle}{\kappa\wedge d\kappa},
\end{equation}
so that we may write,
\begin{equation}
\overline{\operatorname{CS}}(A)=\operatorname{CS}(A)-\overline{\operatorname{S}}(A).
\end{equation}
The primary virtue of Eq. \eqref{Anom1a} above is that it is heuristically equal to the original Chern-Simons partition function of Def. \eqref{heurdef} and yet it is expressed in such a way that the action $\overline{\operatorname{CS}}(A)$ is invariant under the shift symmetry.  This means that $\overline{\operatorname{CS}}(A+\sigma\kappa)=\overline{\operatorname{CS}}(A)$ for all tangent vectors $A\in T_{A_{P}}(\mathcal{A}_{P})\simeq\Omega^{1}(X,\frak{t})$ and $\sigma\in \Omega^{0}(X,\frak{t})$.  We may naturally view $A\in\Omega^1(H,\frak{t})$, the subset of $\Omega^{1}(X,\frak{t})$ restricted to the contact distribution $H\subset TX$.  If $\xi$ denotes the Reeb vector field of $\kappa$, then $\Omega^{1}(H,\frak{t})=\{\alpha\in\Omega^{1}(X,\frak{t})\,\,|\,\,\iota_{\xi}\alpha=0\}$.  The remaining contributions to the partition function come from the orbits of $\mathcal{S}$ in $\mathcal{A}_{P}$, which turn out to give a contributing factor of $\operatorname{Vol}(\mathcal{S})$, \cite[Eq. 3.32]{bw}.  We thus reduce our integral to an integral over $\bar{\mathcal{A}}_{P}:=\mathcal{A}_{P}/\mathcal{S}$ and obtain,
\begin{eqnarray*}
Z_{\mathbb{T}}(X,P,k)&=&\frac{e^{i k \operatorname{CS}_{X,P}(A_{P})}}{\operatorname{Vol}(\mathcal{G})}\ddashint_{\bar{\mathcal{A}}_{P}}\bar{\mathcal{D}}A\,\, \exp\,\left[\frac{i k}{4\pi}\left(\int_{X}\langle A\wedge dA\rangle-\int_{X} \frac{\langle(\kappa\wedge dA)^{2}\rangle}{\kappa\wedge d\kappa}\right)\right],\\
               &=&\frac{e^{i k \operatorname{CS}_{X,P}(A_{P})}}{\operatorname{Vol}(\mathcal{G})}\ddashint_{\bar{\mathcal{A}}_{P}}\bar{\mathcal{D}}A\,\, \exp\,\left[i k\overline{\operatorname{CS}}(A)\right],
\end{eqnarray*}\\
where $\bar{\mathcal{D}}A$ denotes an appropriate quotient measure on $\bar{\mathcal{A}}_{P}$, and $A\in\Omega^1(H,\frak{t})\simeq T_{A_{P}}\bar{\mathcal{A}}_{P}$.
\begin{rem}\label{shiftdef1}
We now make a new heuristic definition of a partition function.  Let $k\in\Z$ and $(X,\kappa)$ a closed, oriented contact three-manifold.  The \emph{shift reduced abelian Chern-Simons partition function, $\bar{Z}_{\mathbb{T}}(X,k)$,} is the heuristic quantity,
\begin{equation}
\bar{Z}_{\mathbb{T}}(X,k)=\sum_{[P]\in\operatorname{Tors}H^{2}(X,\Lambda)}\bar{Z}_{\mathbb{T}}(X,P,k),
\end{equation}
and,
\begin{equation}\label{Anom12}
\bar{Z}_{\mathbb{T}}(X,P,k)=\frac{e^{i k \operatorname{CS}_{X,P}(A_{P})}}{\operatorname{Vol}(\mathcal{G})}\left(\frac{-ik}{4\pi^2}\right)^{\Delta{\mathcal{G}/2}}\ddashint_{\bar{\mathcal{A}}_{P}}\bar{\mathcal{D}}A\,\, \exp\,\left[i k\overline{\operatorname{CS}}(A)\right],
\end{equation}
and $4\pi\overline{\operatorname{CS}}(A):=\int_{X} \langle A\wedge dA\rangle-\int_{X} \frac{\langle(\kappa\wedge dA)^{2}\rangle}{\kappa\wedge d\kappa}$ is the \emph{shift reduced Chern-Simons action}.
\end{rem}

\section{Moment Map Squared Form of the Partition Function}\label{momsquared}
Our starting point is a heuristically defined partition function defined in remark \ref{shiftdef1}.
Our next objective is to determine a moment map $\mu$ for a group action such that the shift reduced action may be identified as the moment map squared, $\overline{\operatorname{CS}}(A)=(\mu,\mu)$.  We note that there may be a simplification of our considerations in this section using the fact that $\mathbb{T}$ is abelian.  We proceed as in the general case, however.  First we observe that the gauge group $\mathcal{G}$ itself cannot act in a Hamiltonian fashion such that $\overline{\operatorname{CS}}(A)=(\mu,\mu)$ since $\overline{\operatorname{CS}}(A)$ is not invariant under ``large'' gauge transformations.  We therefore restrict to the connected component of the gauge group containing the identity element, $\mathcal{G}_{0}$.  Let $Y\in\operatorname{Lie}(\mathcal{G}_0)\simeq \Omega^{0}(X,\frak{t})$, and let $Y^{\#}:=d^{H}Y\in\Gamma(T\bar{\mathcal{A}})$ denote the vector field generated by $Y$ on $\bar{\mathcal{A}}$.  By definition of the symplectic form $\Omega$ on $\bar{\mathcal{A}}$, we have,
$$(\iota_{Y^{\#}}\Omega)(\delta A)=-\int_{X}\kappa\wedge\langle dY\wedge \delta A\rangle.$$
If,
$$\mu^{Y}(A):=\int_{X}\kappa\wedge\langle Y\wedge F_{A}\rangle-\int_{X}d\kappa\wedge\langle Y\wedge A\rangle,$$
then one can show that the moment map equation is satisfied,
$$d\mu^{Y}=\iota_{Y^{\#}}\Omega,$$
using integration by parts.  We observe that the $\mathcal{G}_{0}$ action on $\bar{\mathcal{A}}$ is not Hamiltonian, however.  This may be checked by computing the Poisson bracket,
\begin{eqnarray*}
\left\{\mu^{Y_{1}},\mu^{Y_{2}}\right\}&=&\Omega(d^{H}Y_{1},d^{H}Y_{2}),\\
                                     &=&-\int_{X}\kappa\wedge\langle dY_{1}\wedge dY_{2}\rangle,\\
                                     &=&\int_{X}\kappa\wedge\langle[Y_{1},Y_{2}]\wedge F_{A}\rangle-\int_{X}d\kappa\wedge\langle Y_{1}\wedge dY_{2}\rangle,\\
                                     &=&\mu^{[Y_{1},Y_{2}]}-\int_{X}d\kappa\wedge\langle Y_{1}\wedge dY_{2}\rangle.
\end{eqnarray*}
The obstruction for the map $\mu$ to determine a moment map is given by the cocycle,
\begin{eqnarray}\label{cocycle}
c(Y_{1},Y_{2})&:=&\left\{\mu^{Y_{1}},\mu^{Y_{2}}\right\}-\mu^{[Y_{1},Y_{2}]},\\\nonumber
               &=&-\int_{X}d\kappa\wedge\langle Y_{1}\wedge dY_{2}\rangle,\\\nonumber
               &=&-\int_{X}\langle Y_{1}\wedge \mathcal{L}_{\xi}Y_{2}\rangle\,\, \kappa\wedge d\kappa,
\end{eqnarray}
which clearly does not vanish in general.  Following the ideas of \cite[\S 3.4]{bw}, one may obtain a Hamiltonian action by considering a central extension $\widetilde{\mathcal{G}}_{0}$ by $\operatorname{U}(1)$ of the group $\mathcal{G}_{0}$ determined by the cocycle $c(Y_{1},Y_{2})$,
$$\operatorname{U}(1)\rightarrow \widetilde{\mathcal{G}}_{0}\rightarrow \mathcal{G}_{0}.$$
As in \cite{bw}, we assume that the central $\operatorname{U}(1)$ subgroup of $\widetilde{\mathcal{G}}_{0}$ acts trivially on $\bar{\mathcal{A}}$ and the moment map for the central generator $(0,a)$ of the Lie algebra is constant.  We then see that the new moment map for the action of $\widetilde{\mathcal{G}}_{0}$ on $\bar{\mathcal{A}}$,
$$\mu^{(Y,a)}(A):=\int_{X}\kappa \wedge \langle Y \wedge F_{A}\rangle - \int_{X}d\kappa\langle Y\wedge A\rangle + a,$$
is Hamiltonian,
$$\left\{\mu^{(Y_{1},a_{1})},\mu^{(Y_{2},a_{2})}\right\}=\mu^{[(Y_{1},a_{1}),(Y_{2},a_{2})]},$$
where,
$$[(Y_{1},a_{1}),(Y_{2},a_{2})]:=\left([Y_{1},Y_{2}],c(Y_{1},Y_{2})\right).$$
In order to cast the action $\overline{\operatorname{CS}}$ into a moment map squared form the Lie algebra of the Hamiltonian group must admit a non-degenerate invariant inner product.  The group $\widetilde{\mathcal{G}}_{0}$ does \emph{not} admit such an inner product, however.  
\begin{rem}
Following \cite{bw}, this problem may be solved by assuming that $X$ admits a Seifert structure.  We have made no assumption about the contact structure up until this point and all of our considerations have been completely valid for the general case.  We now make the assumption that the contact structure is such that the Reeb vector field generates a locally free $\U(1)$ action such that the first orbifold Chern number $c_{1}(X)\neq 0$.  It turns out that the closed three manifolds $X$ that admit a locally free $\U(1)$ action such that $c_{1}(X)\neq 0$ are precisely the (quasi-regular) Sasakian manifolds \cite[Theorem 7.5.2]{bg}.
\end{rem}
Since the action of $\U(1)$ on $X$ induces an action on $\widetilde{\mathcal{G}}_{0}$, we naturally consider the semidirect product $\U(1)\ltimes \widetilde{\mathcal{G}}_{0}$, which admits the non-degenerate invariant inner product on the Lie algebra of $\U(1)\ltimes \widetilde{\mathcal{G}}_{0}$,
\begin{equation}\label{pairing1}
\left((p_{1},Y_{1},a_{1}),(p_{2},Y_{2},a_{2})\right)=\left(-\int_{X}\langle Y_{1}\wedge Y_{2}\rangle\,\,\kappa\wedge d\kappa\right) - p_{1}a_{2}-p_{2}a_{1}.
\end{equation}
The Lie bracket on $\operatorname{Lie}(\U(1)\ltimes \widetilde{\mathcal{G}}_{0})$ is also given by,
\begin{equation}\label{lieloop}
\left[(p_{1},Y_{1},a_{1}),(p_{2},Y_{2},a_{2})\right]:=(0,[Y_{1},Y_{2}]+p_{1}\mathcal{L}_{\xi}Y_{2}-p_{2}\mathcal{L}_{\xi}Y_{1},c(Y_{1},Y_{2})).
\end{equation}
One can show that the vector field on $\mathcal{A}$ generated by an element $\textbf{Y}=(p,Y,a)\in\operatorname{Lie}\left(\U(1)\ltimes \widetilde{\mathcal{G}}_{0}\right)$ is given by,
\begin{equation}\label{hvect}
\textbf{Y}^{\#}(A)=dY+p\mathcal{L}_{\xi}A.
\end{equation}
Using \eqref{hvect}, one can see that the moment map in direction of the generator $\textbf{Y}_{p}:=(p,0,0)$ may be given by,
$$\mu^{\textbf{Y}_{p}}(A)=-\frac{1}{2}p\int_{X}\kappa\wedge\langle \mathcal{L}_{\xi}A\wedge A\rangle.$$
Clearly, $\mu^{\textbf{Y}_{p}}(A)$ is invariant under the shift symmetry and descends to $\bar{\mathcal{A}}$.  We claim that the action of $\U(1)\ltimes \widetilde{\mathcal{G}}_{0}$ on $\bar{\mathcal{A}}$ defined above is Hamiltonian with moment map,
\begin{equation}\label{momenteq1}
\mu^{\textbf{Y}}(A)=-\frac{1}{2}p\int_{X}\kappa\wedge\langle \mathcal{L}_{\xi}A\wedge A\rangle+\int_{X}\kappa\wedge \langle Y \wedge dA\rangle-\int_{X}d\kappa\wedge \langle Y \wedge A\rangle+a.
\end{equation}
Let $\textbf{Y}_{Y}:=(0,Y,0)\in \operatorname{Lie}\left(\U(1)\ltimes \widetilde{\mathcal{G}}_{0}\right)$.  In order to show that \eqref{momenteq1} is a moment map we need only compute,
$$\left\{\mu^{\textbf{Y}_{p}},\mu^{\textbf{Y}_{Y}}\right\},$$
since this is the only non-trivial Poisson bracket that remains to be checked.  We compute,
\begin{eqnarray*}
\left\{\mu^{\textbf{Y}_{p}},\mu^{\textbf{Y}_{Y}}\right\}&=&\Omega(p\mathcal{L}_{\xi}A,dY),\\
                                                        &=&-p\int_{X}\kappa\wedge \langle \mathcal{L}_{\xi}A\wedge dY\rangle,\\
                                                        &=&p\int_{X}\kappa\wedge \langle \mathcal{L}_{\xi}Y\wedge dA\rangle-p\int_{X}d\kappa\wedge \langle \mathcal{L}_{\xi}Y\wedge A\rangle,\\
                                                        &=&\mu^{\mathbf{Y}_{p\mathcal{L}_{\xi}Y}},
\end{eqnarray*}
where $\mathbf{Y}_{p\mathcal{L}_{\xi}Y}:=(0,p\mathcal{L}_{\xi}Y,0)$.  By definition of the Lie bracket in \eqref{lieloop}, our last computation shows that the moment map condition is satisfied.  We therefore take $\mathcal{H}:=\U(1)\ltimes \widetilde{\mathcal{G}}_{0}$ to be the Hamiltonian group for abelian Chern-Simons theory.\\
\\
We now claim that the action $\overline{\operatorname{CS}}$ may be expressed in moment map squared form $(\mu,\mu)$ for the moment map defined in \eqref{momenteq1}.  Let $\frak{H}:=\operatorname{Lie}(\mathcal{H})$ and $\langle\langle\cdot,\cdot\rangle\rangle:\frak{H}^{*}\otimes\frak{H}\rightarrow \R$ denote the dual pairing on $\frak{H}$.  Note that we have implictly been using the notation $\mu^{\mathbf{Y}}$ to mean,
$$\mu^{\mathbf{Y}}=\langle\langle\mu,\mathbf{Y}\rangle\rangle,$$
where we view $\mu\in\frak{H}^{*}$.  We make the identification $\frak{H}^{*}\simeq \frak{H}$ via the pairing $(\cdot,\cdot)$ defined in \eqref{pairing1}.  Let $\hat{\mu}\in\frak{H}$ be defined by $\mu=(\hat{\mu},\cdot)$.  We abuse notation and write,
$$(\mu,\mu):=(\hat{\mu},\hat{\mu}).$$
One can check that,
\begin{equation}
\hat{\mu}=\left(-1,-\left(\frac{\kappa\wedge dA-d\kappa\wedge A}{\kappa\wedge d\kappa}\right),\frac{1}{2}\int_{X}\kappa\wedge\langle\mathcal{L}_{\xi}A\wedge A\rangle \right),
\end{equation}
and indeed,
\begin{equation}
\mu^{\mathbf{Y}}=\langle\langle\mu,\mathbf{Y}\rangle\rangle=\left(\hat{\mu},\mathbf{Y} \right),
\end{equation}
for all $\mathbf{Y}\in \frak{H}$.  Thus, by definition,
\begin{eqnarray}\label{momsq}
(\mu,\mu)&=&(\hat{\mu},\hat{\mu})=\langle\langle\mu,\hat{\mu}\rangle\rangle=\mu^{\hat{\mu}},\nonumber\\
         &=&\int_{X}\kappa\wedge\langle\mathcal{L}_{\xi}A\wedge A\rangle-\int_{X}\kappa\wedge d\kappa\left\langle \left(\frac{\kappa\wedge dA-d\kappa\wedge A}{\kappa\wedge d\kappa}\right)^{2}\right\rangle.
\end{eqnarray}
Using Cartan's formula for the Lie derivative, $\mathcal{L}_{\xi}=\{\iota_{\xi},d\}$, and the fact that,
$$\iota_{\xi}A=\frac{A\wedge d\kappa}{\kappa\wedge d\kappa},$$
one finds,
\begin{eqnarray*}
(\mu,\mu)&=&\int_{X}\langle A\wedge dA\rangle-\int_{X}\frac{1}{\kappa\wedge d\kappa}\langle(\kappa\wedge dA)^{2}\rangle,\\
         &=&4\pi\overline{\operatorname{CS}}(A),
\end{eqnarray*}
as desired.  As in \cite[\S 3.3]{bw}, we observe that the path integral measure $\bar{\mathcal{D}}A$ should be identified with the corresponding symplectic measure $\operatorname{exp}(\Omega)$ in the path integral and we may write,
\begin{equation}\label{Anom1234}
\bar{Z}_{\mathbb{T}}(X,P,k)=\frac{e^{i k \operatorname{CS}_{X,P}(A_{P})}}{\operatorname{Vol}(\mathcal{G})}\left(\frac{-ik}{4\pi^2}\right)^{\Delta{\mathcal{G}/2}}\ddashint_{\bar{\mathcal{A}}_{P}}\exp\,\left[\Omega+\frac{i k}{4\pi}(\mu,\mu)\right].
\end{equation}
\section{Non-Abelian Localization for Abelian Chern-Simons Theory}\label{locabelchern}
Our starting point in this section is the main result of \S\ref{momsquared}, which expresses the abelian partition function in \emph{moment map squared} form,
\begin{equation}\label{Anom12345}
\bar{Z}_{\mathbb{T}}(X,P,k)=\frac{e^{i k \operatorname{CS}_{X,P}(A_{P})}}{\operatorname{Vol}(\mathcal{G})}\left(\frac{-ik}{4\pi^2}\right)^{\Delta{\mathcal{G}/2}}\ddashint_{\bar{\mathcal{A}}_{P}}\exp\,\left[\Omega+\frac{i k}{4\pi}(\mu,\mu)\right].
\end{equation}
Starting with this description of the partition function we follow the main arguments of \cite{bw}, which must be adapted slightly for an abelian structure group, to arrive at the final rigorous definition \ref{symrigdef}.
\subsection{A Two-Dimensional Description of the Abelian Partition Function}
Let,
\begin{eqnarray*}
f_{\mu}:=(\mu,\mu)&=&\int_{X}\langle A\wedge dA\rangle-\int_{X}\frac{1}{\kappa\wedge d\kappa}\langle(\kappa\wedge dA)^{2}\rangle,\\
         &=&4\pi\overline{\operatorname{CS}}(A).
\end{eqnarray*}
Observe that the critical points of $f_{\mu}$ satisfy the equation of motion,
\begin{equation}\label{eomshift}
dA-(\star_{H}dA)\wedge d\kappa-\kappa\wedge d\star_{H}dA=0.
\end{equation}
Recall the following,
\begin{define}
Let $(X,\phi,\xi,\kappa,\operatorname{g})$ be a contact metric three manifold and define the \emph{horizontal Hodge star} to be the operator,
\begin{equation*}
\star_{H}:\Omega^{q}(X,\frak{t})\rightarrow \Omega^{2-q}(H,\frak{t})\,\,q=0,1,2,
\end{equation*}
defined for $\beta\in\Omega^{q}(X,\frak{t})$ by,
\begin{equation}
\star_{H}\beta=\star(\kappa\wedge\beta),
\end{equation}
where $\star$ is the usual Hodge star operator on forms for the metric $\operatorname{g}=\kappa\otimes\kappa+d\kappa(\cdot,\phi\cdot)$ on $X$.
\end{define}
Since $f_{\mu}$ is invariant under the shift symmetry, it is clear that \eqref{eomshift} is also invariant under this symmetry. Thus, the critical points of $f_{\mu}$ can be classified as solutions of \eqref{eomshift} relative to any convenient gauge choice for the shift symmetry.  Observe that the quantity $\kappa\wedge dA$ transforms as,
$$\kappa\wedge dA\rightarrow \kappa\wedge dA+\sigma\kappa\wedge d\kappa,$$
under the shift symmetry for arbitrary $\sigma\in\Omega^{1}(X,\frak{t})$.  Thus, a valid gauge condition is given by setting,
\begin{equation}\label{shiftgauge1}
\star_{H} dA=0,
\end{equation}
since given arbitrary $A\in\Omega^{1}(X,\frak{t})$, we may set $\sigma:=\star_{H}dA$ uniquely so that $A-\sigma\kappa$ satisfies \eqref{shiftgauge1}.  In this gauge the solutions of the equation of motion \eqref{eomshift} are precisely the flat connections.
\begin{rem}
Note that the normalization $\bar{Z}_{\mathbb{T}}(X,P,k)$ in \eqref{Anom12345} needs to be revised slightly to take into account the fact that we have replaced the gauge group $\mathcal{G}$ with the group $\mathcal{H}:=\U(1)\ltimes \widetilde{\mathcal{G}}_{0}$ in our considerations.  In fact, we should formally replace $\mathcal{G}$ with $\mathcal{H}':=\U(1)\ltimes \widetilde{\mathcal{G}}$, where $\widetilde{\mathcal{G}}$ represents a central extension of the full gauge group by $\operatorname{U}(1)$.  We therefore wish to formally consider,
\begin{equation}\label{Anom123456}
\bar{Z}_{\mathbb{T}}'(X,P,k):=\frac{e^{i k \operatorname{CS}_{X,P}(A_{P})}}{\operatorname{Vol}(\mathcal{H}')}\left(\frac{-ik}{4\pi^2}\right)^{\Delta{\mathcal{H}'/2}}\ddashint_{\bar{\mathcal{A}}_{P}}\exp\,\left[\Omega+\frac{i k}{4\pi}(\mu,\mu)\right],
\end{equation}
where $\Delta_{\mathcal{H}'}=\operatorname{dim}\mathcal{H}'$.
As observed in \cite[Eq. 5.10]{bw} this results in a difference between $\bar{Z}_{\mathbb{T}}(X,P,k)$ and $\bar{Z}_{\mathbb{T}}'(X,P,k)$ by the finite multiplicative factor,
\begin{equation}\label{finfact}
\frac{\operatorname{Vol}(\mathcal{H}')}{i\operatorname{Vol}(\mathcal{G})}\left(\frac{-ik}{4\pi^{2}}\right)^{\frac{1}{2}(\Delta_{\mathcal{G}}-\Delta_{\mathcal{H}'})}=\operatorname{Vol}(U(1)^{2})\cdot\frac{4\pi^{2}}{k}.
\end{equation}
The technique of non-abelian localization applies more directly to $\bar{Z}_{\mathbb{T}}'(X,P,k)$ and this is the quantity that we will consider.  In the end we must multiply our results by the factor \eqref{finfact} to recover the Chern-Simons path integral $\bar{Z}_{\mathbb{T}}(X,P,k)$.
\end{rem}
Next, we focus on giving a ``two-dimensional'' description of the local symplectic geometry in $\bar{\mathcal{A}}_{P}$ around a critical point of the Chern-Simons action analogous to \cite[\S5.1]{bw}.  For this it will be convenient to choose a particular gauge for the shift symmetry corresponding to the gauge condition,
\begin{equation}\label{shiftgaugechoice}
\iota_{\xi}A=0.
\end{equation}
Clearly \eqref{shiftgaugechoice} defines a good gauge condition since if $\iota_{\xi}A=0$, then $\iota_{\xi}(A+\sigma\kappa)=\sigma\neq 0$, i.e. the condition \eqref{shiftgaugechoice} picks out a unique representative for each orbit of the shift symmetry.  Note that this gauge condition is defined on the tangent space $T_{A_{P}}\mathcal{A}_{P}$ about a flat connection $A_{P}$.  Of course, this is the natural gauge that we will implicitly work with throughout, and gives the identification,
\begin{equation}\label{natguage}
T_{A_{P}}\bar{\mathcal{A}}_{P}\simeq \Omega^{1}(H,\frak{t}),
\end{equation}
where $\Omega^{1}(H,\frak{t}):=\{A\in\Omega^{1}(X,\frak{t})\,\,|\,\,\iota_{\xi}A=0\}$.
Given that the Reeb vector field $\xi$ generates a locally free $\U(1)$ action, we naturally decompose the tangent space $T_{A_{P}}\bar{\mathcal{A}}_{P}\simeq \Omega^{1}(H,\frak{t})$ with respect to this action and write,
\begin{equation}\label{eigdecomp}
A=\sum_{l\in\Z}A_{l},
\end{equation}
where $A_{l}\in\Omega^{1}(H,\frak{t})$ are eigenmodes of the Lie derivative $\mathcal{L}_{\xi}$,
\begin{equation}\label{aeigdcom}
\mathcal{L}_{\xi}A_{l}=-2\pi i l A_{l}.
\end{equation}
We also decompose $Y\in\operatorname{Lie}(\mathcal{G})\simeq \Omega^{0}(X,\frak{t})$ with respect to the $\U(1)$ action,
\begin{equation}\label{yeigdecomp}
Y=\sum_{l\in\Z}Y_{l},
\end{equation}
where,
\begin{equation}\label{yeigdcom}
\mathcal{L}_{\xi}Y_{l}=-2\pi i l Y_{l}.
\end{equation}
Let,
\begin{equation}
\mathcal{L}:=X\times_{\U(1)}\C,
\end{equation}
denote the complex line V-bundle over $\Sigma$ associated to the standard representation on $\C$.  We view the eigenmodes $A_{l}\in\Omega^{1}(H,\frak{t})$ as naturally corresponding to elements of,
$$\Omega^{1}(\Sigma,\mathcal{L}^{l}\otimes\frak{t}):=\Gamma(\Sigma,T^{*}\Sigma\otimes\mathcal{L}^{l}\otimes\frak{t}),$$
and formally decompose the tangent space $T\bar{\mathcal{A}}_{P}$ at $A_{P}$ as,
\begin{equation}
T_{A_{P}}\bar{\mathcal{A}}_{P}=\bigoplus_{l\in\Z}\Omega^{1}(\Sigma,\mathcal{L}^{l}\otimes\frak{t}).
\end{equation}
Similarly, we decompose $\operatorname{Lie}(\mathcal{G})$,
\begin{equation}
\operatorname{Lie}(\mathcal{G})=\bigoplus_{l\in\Z}\Omega^{0}(\Sigma,\mathcal{L}^{l}\otimes\frak{t}).
\end{equation}
\subsection{Non-Abelian Localization Applied in Abelian Chern-Simons Theory}
As in \cite[\S 4.2]{bw}, a local symplectic neighborhood of $\bar{\mathcal{A}_{P}}$ near $\mathcal{M}_{P}$, say $N$, is an equivariant fibration,
\begin{equation}\label{locsympl}
F\rightarrow N\rightarrow \mathcal{M}_{P},
\end{equation}
where the fibre $F$ takes the form,
\begin{equation}\label{symbunf}
F=\mathcal{H}\times_{\mathcal{H}_{0}}(\frak{h}\ominus\frak{h}_{0}\ominus\mathcal{E}_{0}\oplus\mathcal{E}_{1}),
\end{equation}
and $\mathcal{H}_{0}$, $\mathcal{E}_{0}$, $\mathcal{E}_{1}$ remain to be identified.  Note that the symbol ``$\ominus$'' is to be interpreted in the sense of K-theory.  In \eqref{symbunf}, we have $\mathcal{H}=\U(1)\ltimes\tilde{\mathcal{G}}_{0}$ as before.  As in \cite[Eq. 5.27]{bw}, $\mathcal{H}_{0}$ is the subgroup of $\mathcal{H}$ that fixes $A_{P}$ and in general is of the form,
\begin{equation}\label{hnot}
\mathcal{H}_{0}=\operatorname{U}(1)\times\U(1)\times I,
\end{equation}
where $\operatorname{U}(1)$ arises as the central extension group for $\tilde{\mathcal{G}}_{0}$, $\U(1)$ arises as the group acting on $\mathcal{A}_{P}$ induced from the geometric action on $X$, and $I\simeq\mathbb{T}$ is the isotropy subgroup of $A_{P}$.  As in \cite[Eq. 5.29]{bw} we may identify $\mathcal{E}_{0}$ and $\mathcal{E}_{1}$ as,
\begin{equation}\label{idente0}
\mathcal{E}_{0}=\bigoplus_{\l\in\mathbb{N}}H^{0}_{\bar{\partial}}(\Sigma,(\mathcal{L}^{l}\oplus\mathcal{L}^{-l})\otimes\frak{t}),
\end{equation}
and,
\begin{equation}\label{idente1}
\mathcal{E}_{1}=\bigoplus_{\l\in\mathbb{N}}H^{1}_{\bar{\partial}}(\Sigma,(\mathcal{L}^{l}\oplus\mathcal{L}^{-l})\otimes\frak{t}).
\end{equation}
Now we express the partition function in \eqref{Anom12345},
\begin{equation}\label{ympartfun12}
\bar{Z}_{\mathbb{T}}(X,P,k)=\frac{e^{i k \operatorname{CS}_{X,P}(A_{P})}}{\operatorname{Vol}(\mathcal{G})}\left(\frac{-ik}{4\pi^2}\right)^{\Delta{\mathcal{G}/2}}\ddashint_{\bar{\mathcal{A}}_{P}}\exp\,\left[\Omega+\frac{i k}{4\pi}(\mu,\mu)\right],
\end{equation}
in a form that non-abelian localization can be more directly applied.  We write the following for $\bar{Z}_{\mathbb{T}}(X,P,k)$,
\begin{eqnarray}\label{ymeq123}
\frac{e^{i k \operatorname{CS}_{X,P}(A_{P})}}{\operatorname{Vol}(\mathcal{G})}\int_{\operatorname{Lie}\mathcal{G}}\left[\frac{dY}{2\pi}\right]\ddashint_{\mathcal{A}}\operatorname{exp}\left[\Omega+i\langle\mu,{Y}\rangle-\frac{4\pi i}{k}(Y,Y)\right],
\end{eqnarray}
where the equivalence of Eq.'s \eqref{ympartfun12} and \eqref{ymeq123} may be seen by doing the formal Gaussian integral over $\operatorname{Lie}\mathcal{G}$ in \eqref{ymeq123}.  We begin the localization computation by choosing $\theta$ as,
\begin{equation}
\theta := \tilde{J}df_\mu,
\end{equation}
where $f_{\mu}:=\frac{1}{2}(\mu,\mu)$.  We then write the integral in \eqref{ymeq123} over $N$ and this gives the following for $\bar{Z}_{\mathbb{T}}(X,P,k)$ in \eqref{locsympl} as,
\begin{equation}\label{ymeq1234}
\frac{4\pi^{2}}{k}\frac{\operatorname{Vol}(U(1)^{2})}{\operatorname{Vol}(\mathcal{H})}e^{i k \operatorname{CS}_{X,P}(A_{P})}\ddashint_{\frak{H}\times N}\left[\frac{dY}{2\pi}\right]\operatorname{exp}\left[\Omega+i\langle\mu,{Y}\rangle-\frac{4\pi i}{k}(Y,Y)+t\cdot D\theta\right],
\end{equation}
where $\frak{H}=\operatorname{Lie}\mathcal{H}$, $D:=d_{\operatorname{Lie}\mathcal{G}}$ denotes the formal equivariant derivative, $t\in\R$ and we include the normalization factor $\operatorname{Vol}(U(1)^{2})$ as in \eqref{finfact}.  The goal is now to reduce the integral over $\frak{H}\times N$ in \eqref{ymeq1234} to an integral over $\mathcal{M}_{P}$.  We start by observing that the fibre $F$ in \eqref{locsympl} may be modelled on the cotangent bundle $T^{*}\mathcal{H}$, so that $N$ equivariantly retracts onto a principal $\mathcal{H}$-bundle $P_{\mathcal{H}}$ over the moduli space $\mathcal{M}_{P}$.  Following the argument of \cite{w2} we observe that if $\mathcal{H}$ acts freely on $P_{\mathcal{H}}$ that the equivariant cohomology of the total space $P_{\mathcal{H}}$ may be identified with the ordinary cohomology of the quotient $P_{\mathcal{H}}/\mathcal{H}\simeq \mathcal{M}_{P}$.  This allows us to identify the equivariant forms $\Omega+i\langle\mu,{Y}\rangle$ and $(Y,Y)$ with the pullback of ordinary forms on $\mathcal{M}_{P}$.\\
\\
In Chern-Simons theory, $\mathcal{H}$ does not act freely on $N$, however.  We may still follow the same reasoning as in Yang-Mills theory \cite{w2} by taking into account that the subgroup $\mathcal{H}_{0}$ in \eqref{hnot} acts on $N$ with fixed points.  We then obtain an equivariant retraction $N_{0}$ of $N$ as a bundle with fibre $\mathcal{H}/\mathcal{H}_{0}$,
\begin{equation}\label{fundle}
\mathcal{H}/\mathcal{H}_{0}\rightarrow N_{0}\rightarrow \mathcal{M}_{P}.
\end{equation}
Following the same argument of \cite[Eq. 5.109]{bw}, we may identify the $\mathcal{H}$ equivariant forms,
$$\Omega+i\langle\mu,{Y}\rangle,\,\,\,(Y,Y),$$
on $N$ with the corresponding $\mathcal{H}_{0}$ equivariant forms,
$$\Omega+ia,\,\,\, n\Theta+pa,$$
on $\mathcal{M}_{P}$, respectively, via pullback by a map,
\begin{equation}\label{projeq}
\operatorname{pr}:N\rightarrow \mathcal{M}_{P}.
\end{equation}
Note that $p, a\in \frak{H}_{0}$, $\Theta\in H^{4}(\mathcal{M}_{P})$, $n=c_{1}(X)$ is the first Chern number of $X$ as a bundle over $\Sigma$, and we have abused notation by writing $\Omega$ to represent the corresponding forms on both $N$ and $\mathcal{M}_{P}$.  Note that for simplicity we assume that $\U(1)$ acts freely on $X$.  Let, $$\tilde{K}:=\frac{4\pi^{2}}{k}\frac{\operatorname{Vol}(U(1)^{2})}{\operatorname{Vol}(\mathcal{H})}e^{i k \operatorname{CS}_{X,P}(A_{P})}.$$
We may then write,
\begin{equation}\label{ymeq12345}
\bar{Z}_{\mathbb{T}}(X,P,k)=\tilde{K}\ddashint_{\frak{H}\times N}\left[\frac{dY}{2\pi}\right]\operatorname{exp}\left[\operatorname{pr}^{*}\Omega+ia\left(1-\frac{2\pi}{k}p\right)+\frac{2\pi i n}{k}\operatorname{pr}^{*}\Theta+t\cdot D\theta\right],
\end{equation}
exactly as in \cite[Eq. 5.111]{bw}.  Next, we decompose $\frak{H}$ as,
\begin{equation*}\label{frakH}
\frak{H}=\left(\frak{H}\ominus\frak{H}_{0}\right)\oplus\frak{H}_{0},
\end{equation*}
and integrate over the variables $a,p$ spanning $\frak{H}_{0}$ in \eqref{ymeq12345}.  Setting $\epsilon:=\frac{2\pi}{k}$, the integral over $a$ will produce a delta function that sets $p=\frac{1}{\epsilon}$, and we obtain the following for $\bar{Z}_{\mathbb{T}}(X,P,k)$,
\begin{equation}\label{ymeq123456}
\frac{\operatorname{Vol}(U(1)^{2})}{\operatorname{Vol}(\mathcal{H})}e^{i k \operatorname{CS}_{X,P}(A_{P})}\ddashint_{\left(\frak{H}\ominus\frak{H}_{0}\right)\times N}\left[\frac{dY}{2\pi}\right]\operatorname{exp}\left[\operatorname{pr}^{*}\Omega+i\epsilon n\operatorname{pr}^{*}\Theta+t\cdot D\theta|_{\left\{p=\frac{1}{\epsilon}\right\}}\right].
\end{equation}
Note that we will drop the term $\frac{1}{k}$ that occurs in the definition of $\tilde{K}$ above and implicitly redefine the partition function to take this into account.  As in \cite{w2}, the only term which is a not pull back from $\mathcal{M}_{P}$ is the localization term $t\cdot D\theta$, and we are left to perform the computation of the $t\cdot D\theta$ dependent part of \eqref{ymeq123456} over $F$.  In Yang-Mills theory \cite{w2} finds that the corresponding integral over $F=T^{*}\mathcal{H}$ produces a trivial factor of 1, and this is no longer the case in Chern-Simons theory as is observed in \cite{bw}.  The quantity of interest is then,
\begin{equation}\label{quantint}
I(\psi):=\frac{1}{\operatorname{Vol}\mathcal{H}}\ddashint_{\widetilde{F}}\left[\frac{dY}{2\pi}\right]\operatorname{exp}(t\cdot D\theta),
\end{equation}
where,
$$\widetilde{F}:=\left(\frak{H}\ominus\frak{H}_{0}\right)\times F,$$
and,
$$\psi=p+a\in\frak{H}_{0},$$
and we set $p=\frac{1}{\epsilon}$ at the end of the computation.  Following the exact same reasoning that leads to \cite[Eq. 5.117]{bw}, we may identify,
\begin{equation}\label{equivarcoh}
I(\psi)=\frac{1}{\operatorname{Vol}\mathcal{H}_{0}}\frac{e_{\mathcal{H}_{0}}(\mathcal{M}_{P},\mathcal{E}_{0})}{e_{\mathcal{H}_{0}}(\mathcal{M}_{P},\mathcal{E}_{1})},
\end{equation}
where $e_{\mathcal{H}_{0}}(\mathcal{M}_{P},\mathcal{E}_{0})$, $e_{\mathcal{H}_{0}}(\mathcal{M}_{P},\mathcal{E}_{1})$ are the $\mathcal{H}_{0}$-equivariant Euler classes of the bundles associated to $\mathcal{E}_{0}$, $\mathcal{E}_{1}$ as in \eqref{idente0}, \eqref{idente1} over $\mathcal{M}_{P}$.  This may be confirmed by direct computation exactly as in \cite[Appendix D]{bw} and we do not repeat this argument here.  Define,
\begin{equation}\label{epdef}
e(p):=\frac{e_{\mathcal{H}_{0}}(\mathcal{M}_{P},\mathcal{E}_{0})}{e_{\mathcal{H}_{0}}(\mathcal{M}_{P},\mathcal{E}_{1})}.
\end{equation}
Then our considerations so far yield,
\begin{equation}\label{absteq}
\bar{Z}_{\mathbb{T}}(X,P,k)=\frac{e^{i k \operatorname{CS}_{X,P}(A_{P})}}{\operatorname{Vol}I}\int_{\mathcal{M}_{P}}e(p)|_{\left\{p=\frac{1}{\epsilon}\right\}}\operatorname{exp}\left[\Omega+i\epsilon n\Theta\right].
\end{equation}
Note that in deriving \eqref{absteq} that the factor $\operatorname{Vol}(U(1)^{2})$ in \eqref{ymeq123456} cancels with a factor in $\operatorname{Vol}\mathcal{H}_{0}$ in $I(\psi)$.  Recall that $\Theta\in H^{4}(\mathcal{M}_{P})$ is
the cohomology class corresponding to the degree four element $(Y, Y)$
in the equivariant cohomology $H^4_{\mathcal{H}}(N)$.  We observe that in the abelian case $G=\mathbb{T}$ that $\Theta$ can also be described in terms of the universal bundle $\U$,
\begin{displaymath}
\xymatrix{\C \ar@{^{(}->}[r] & \U \ar[d]\\
                              & \operatorname{Jac}(\Sigma)\times\Sigma}.
\end{displaymath}
In other words,
\begin{equation*}
\Theta=-\frac{1}{2}c_{1}(\U)^{2}|_{\text{pt.}\in\Sigma},
\end{equation*}
where $\operatorname{Jac}(\Sigma)$ is the Jacobian of $\Sigma$.  In this case $\Theta=0$ since the universal bundle $\U$ for $\mathbb{T}$-bundles is the classical Poincar\'{e} line bundle, and the Poincar\'{e} line bundle is normalized to have degree $d=0$ when restricted to the Jacobian of $\Sigma$.  Thus, our computation produces the simple result,
\begin{equation}\label{absteq}
\bar{Z}_{\mathbb{T}}(X,P,k)=\frac{e^{i k \operatorname{CS}_{X,P}(A_{P})}}{\operatorname{Vol}I}\int_{\mathcal{M}_{P}}e(p)|_{\left\{p=\frac{1}{\epsilon}\right\}}\operatorname{exp}\left[\Omega\right].
\end{equation}
Our main goal now is to compute/define $e(p)|_{\left\{p=\frac{1}{\epsilon}\right\}}$.
\subsection{Computation of the Symplectic Abelian Partition Function}
We compute/define $e(p)|_{\left\{p=\frac{1}{\epsilon}\right\}}$ following the main arguments in \cite{bw} and adapt their technique to the case of an abelian structure group.  The main difference in the case of an abelian structure group shows up in the $k$-dependence of the partition function $\bar{Z}_{\mathbb{T}}(X,P,k)$.  Note that $\epsilon=\frac{2\pi}{k}$.  This difference is due to the fact that \cite{bw} works with irreducible flat connections and the corresponding zeroth cohomology spaces vanish, whereas an abelian structure group necessarily has non-vanishing zeroth cohomology.\\
\\
Recall,
\begin{equation}\label{epdef2}
e(p):=\frac{e_{\mathcal{H}_{0}}(\mathcal{M}_{P},\mathcal{E}_{0})}{e_{\mathcal{H}_{0}}(\mathcal{M}_{P},\mathcal{E}_{1})}.
\end{equation}
Let $\mathcal{E}_{0}^{[l]}$, $\mathcal{E}_{1}^{[l]}$ denote the natural eigenspaces in $\mathcal{E}_{0}$, $\mathcal{E}_{1}$ under the action of $\mathcal{H}_{0}$, so that,
\begin{equation}\label{ep1}
\mathcal{E}_{0}^{[l]}=H^{0}_{\bar{\partial}}(\Sigma,\mathcal{L}^{l}\otimes\frak{t}),
\end{equation}
and,
\begin{equation}\label{ep2}
\mathcal{E}_{1}^{[l]}=H^{1}_{\bar{\partial}}(\Sigma,\mathcal{L}^{l}\otimes\frak{t}).
\end{equation}
As in \cite[Eq. 5.126]{bw}, we may write,
\begin{equation}\label{ep123}
e(p)=\prod_{l\neq 0}\left[\frac{e_{\mathcal{H}_{0}}(\mathcal{M}_{P},\mathcal{E}_{0}^{[l]})}{e_{\mathcal{H}_{0}}(\mathcal{M}_{P},\mathcal{E}_{1}^{[l]})}\right]=\prod_{l\geq 1}\left[\frac{e_{\mathcal{H}_{0}}(\mathcal{M}_{P},\mathcal{E}_{0}^{[l]})\cdot e_{\mathcal{H}_{0}}(\mathcal{M}_{P},\mathcal{E}_{0}^{[-l]})}{e_{\mathcal{H}_{0}}(\mathcal{M}_{P},\mathcal{E}_{1}^{[l]})\cdot e_{\mathcal{H}_{0}}(\mathcal{M}_{P},\mathcal{E}_{1}^{[-l]})}\right],
\end{equation}
where $e_{\mathcal{H}_{0}}(\mathcal{M}_{P},\mathcal{E}_{0}^{[l]})$, $e_{\mathcal{H}_{0}}(\mathcal{M}_{P},\mathcal{E}_{1}^{[l]})$ denote the $\mathcal{H}_{0}$ equivariant Euler classes of the finite dimensional bundles determined by $\mathcal{E}_{0}^{[l]}$, $\mathcal{E}_{1}^{[l]}$ over $\mathcal{M}_{P}$.  \cite{bw} then finds a recursive relation between the equivariant Euler classes of $\mathcal{E}_{0}^{[l]}$, $\mathcal{E}_{0}^{[l]}$, $\mathcal{E}_{1}^{[-l]}$, $\mathcal{E}_{1}^{[-l]}$ by choosing a convenient holomorphic structure on $\mathcal{L}$.  Using this recursive relation \cite{bw} find,
\begin{equation}\label{epalfin}
e(p)=\prod_{l\neq 0}\frac{1}{e_{\mathcal{H}_{0}}(\mathcal{M}_{P},T\mathcal{M}_{P}^{l})}\left[e_{\mathcal{H}_{0}}(\mathcal{M}_{P},\mathcal{V}^{l}_{P})\right]^{nl},
\end{equation}
where $T\mathcal{M}_{P}^{l}$ denotes the $\mathcal{H}_{0}$ equivariant version of the tangent space of $\mathcal{M}_{P}$, and $\mathcal{V}^{l}_{P}$ denotes a bundle associated to the chosen holomorphic structure and is defined in \cite[pg. 103]{bw}.  The result in \eqref{epalfin} assumes that points in the moduli space correspond to \emph{irreducible} flat connections so that $\mathcal{E}_{0}^{[0]}$ vanishes.  In our case $\mathcal{E}_{0}^{[0]}=H^{0}_{\bar{\partial}}(\Sigma,\frak{t})$ does not vanish and we must revise accordingly.  Let $\mathcal{I}^{l}_{P}$ denote the $\mathcal{H}_{0}$ equivariant bundle over $\mathcal{M}_{P}$ associated to the bundle with fiber $\mathcal{E}_{0}^{[0]}$ such that $\mathcal{H}_{0}$ acts on the fiber with eigenvalue $-2\pi i l$.  Our revised version of \eqref{epalfin} is then,
\begin{equation}\label{epalfin2}
e(p)=\prod_{l\neq 0}\frac{e_{\mathcal{H}_{0}}(\mathcal{M}_{P},\mathcal{I}_{P}^{l})}{e_{\mathcal{H}_{0}}(\mathcal{M}_{P},T\mathcal{M}_{P}^{l})}\left[e_{\mathcal{H}_{0}}(\mathcal{M}_{P},\mathcal{V}^{l}_{P})\right]^{nl}.
\end{equation}
Following the the exact method leading to \cite[Eq. 5.144]{bw} we may factorize $e(p)$ as a product of the three terms,
\begin{equation}\label{epalfin23}
e(p)=\prod_{l\neq 0}\left[\prod_{j=1}^{\operatorname{dim}\mathbb{T}}(-ilp+\iota_{j})\right]\left[\prod_{j=1}^{\operatorname{dim}\mathcal{M}_{P}/2}\frac{1}{(-ilp+w_{j})}\right]\left[\prod_{j=1}^{\operatorname{dim}\mathbb{T}}(-ilp+\nu_{j})^{nl}\right], \end{equation}
where $\iota_{j}, w_{j}, \nu_{j}$ are the Chern roots of the bundles $\mathcal{I}_{P}$, $T\mathcal{M}_{P}$, $\mathcal{V}_{P}$, respectively.  Let,
\begin{eqnarray}
f_{\mathcal{I}}(z)&=&\prod_{l\neq 0}\prod_{j=1}^{\operatorname{dim}\mathbb{T}}(-il+z\iota_{j}),\\
f_{\mathcal{M}}(z)&=&\prod_{l\neq 0}\prod_{j=1}^{\operatorname{dim}\mathcal{M}_{P}/2}(-il+zw_{j})^{-1},\\
f_{\mathcal{V}}(z)&=&\prod_{l\neq 0}\prod_{j=1}^{\operatorname{dim}\mathbb{T}}(-il+z\nu_{j})^{nl},
\end{eqnarray}
where $z=1/p$ is a formal parameter.  Factoring out $p$ from each of the three terms in \eqref{epalfin23} and using the Riemann zeta function to define the infinite products,
\begin{eqnarray*}
\prod_{l\geq 1}p^{2\operatorname{dim}\mathbb{T}}=\operatorname{exp}(2\operatorname{dim}\mathbb{T}\cdot \operatorname{ln}p\cdot \zeta(0))=p^{-\operatorname{dim}\mathbb{T}},\\
\prod_{l\geq 1}p^{-\operatorname{dim}\mathcal{M}_{P}}=\operatorname{exp}(-\operatorname{dim}\mathcal{M}_{P}\cdot \operatorname{ln}p\cdot \zeta(0))=p^{\operatorname{dim}\mathcal{M}_{P}/2},
\end{eqnarray*}
we may write,
\begin{equation}\label{epalfin223}
e(p)=p^{\frac{1}{2}\left(\operatorname{dim}\mathcal{M}_{P}-2\operatorname{dim}\mathbb{T}\right)}\cdot f_{\mathcal{I}}(z)\cdot f_{\mathcal{M}}(z)\cdot f_{\mathcal{V}}(z).
\end{equation}
One may apply exactly the same reasoning that leads to \cite[5.167]{bw} using appropriate zeta and eta function regularizations to rigorously define the quantities $f_{\mathcal{I}}(z), f_{\mathcal{M}}(z), f_{\mathcal{V}}(z)$.  We will not repeat the full argument here and instead we point out the the main differences that arise for the case of an abelian group $\mathbb{T}$.  First, since $\mathcal{M}_{P}\simeq \mathbb{T}^{2g}$, we have,
\begin{eqnarray*}
c(\mathcal{M}_{P})&:=&c(T\mathcal{M}_{P}),\\
                          &=& \prod_{j=1}^{\operatorname{dim}\mathcal{M}_{P}/2}c(L_{j})=\prod_{j=1}^{\operatorname{dim}\mathcal{M}_{P}/2}(1+x_{j}),
\end{eqnarray*}
where $L_{j}=T\Sigma_{j}$, $x_{j}=c_{1}(L_{j})\in H^2(\Sigma_{j}, \Z)$, and $\Sigma_{j}\simeq(\operatorname{U}(1))^2$.  Since $\Sigma_{j}$ are Lie groups, the tangent bundles $T\Sigma_{j}$
are trivial and hence,
\begin{equation*}
x_{j}=c_{1}(T\Sigma_{j})=0.
\end{equation*}
Thus,
\begin{equation}
\widehat{A}(\mathcal{M}_{P})=\prod_{j=1}^{\operatorname{dim}\mathcal{M}_{P}/2}\frac{x_{j}/2}{\text{sinh}(x_{j}/2)}=1.
\end{equation}
Clearly, $c_{1}(T\mathcal{M}_{P})=0$ as well.  Using these observations and the method leading to \cite[5.167]{bw}, we have,
\begin{eqnarray}
f_{\mathcal{I}}(p)=\left(2\pi\right)^{\operatorname{dim}\mathbb{T}},\,\,\,
f_{\mathcal{M}}(p)=\left(\frac{1}{2\pi}\right)^{\operatorname{dim}\mathcal{M}_{P}/2},\,\,\,
f_{\mathcal{V}}(p)=\operatorname{exp}\left(\frac{-i\pi}{2}\eta_{0}\right),
\end{eqnarray}
where $\eta_{0}$ is the adiabatic eta invariant of $X$ and was first defined in \cite{nic}.  Note that $\eta_{0}$ is computed explicitly in \cite[Appendix C]{beas} and is given by,
\begin{equation*}
\eta_{0}=N\left(\frac{c_{1}(X)}{6}-2\sum_{j=1}^{M}s(\alpha_{j},\beta_{j})\right),
\end{equation*}
where $s(\alpha,\beta):=\frac{1}{4\alpha}\sum_{j=1}^{\alpha-1}\operatorname{cot}\left(\frac{\pi j}{\alpha}\right)\operatorname{cot}\left(\frac{\pi j\beta}{\alpha}\right)\in\Q$ is the classical Rademacher-Dedekind sum and $[g,n; (\alpha_{1},\beta_{1}),\ldots,(\alpha_{M},\beta_{M})]$ (for $\operatorname{gcd}(\alpha_{j},\beta_{j})=1$) are the Seifert invariants of $X$.  Overall, we have,
\begin{equation}\label{finep}
e(p)=\left(\frac{1}{2\pi}\right)^{\operatorname{dim}H^{1}(X,\frak{t})}\cdot k^{\frac{1}{2}\left(\operatorname{dim}H^{1}(X,\frak{t})-2\operatorname{dim}H^{0}(X,\frak{t})\right)}\cdot\operatorname{exp}\left(-\frac{i\pi}{2}\eta_{0}\right),
\end{equation}
where we have used the fact that $\operatorname{dim}H^{1}(X,\frak{t})=\operatorname{dim}\mathcal{M}_{P}$, $\operatorname{dim}H^{0}(X,\frak{t})=\operatorname{dim}\mathbb{T}$ and $p=\frac{1}{\epsilon}=\frac{k}{2\pi}$.\\
\\
Finally, as is noted in \cite[Pg. 89-92]{bw} the derivation of the partition function is done implicitly with respect to a choice of the so called \emph{Seifert framing} on $X$.  This choice of framing results in a difference of a factor of $e^{i\delta\Psi}$ in the partition function relative to the canonical framing, where for general gauge group, \cite[Eq. 5.101]{bw},
\begin{equation}\label{phaseshift}
e^{i\delta\Psi}=\operatorname{exp}\left(\frac{i\pi\Delta_{G}}{4}-\frac{i\pi\Delta_{G}\text{\v{c}}_{\frak{g}}}{12(k+\text{\v{c}}_{\frak{g}})}\theta_{0}+\frac{i\pi}{2}\eta_{0}\right).
\end{equation}
For the case of an abelian group $\mathbb{T}$, $\Delta_{G}=N$ and $\frac{i\pi\Delta_{G}\text{\v{c}}_{\frak{g}}}{12(k+\text{\v{c}}_{\frak{g}})}\theta_{0}=0$.  Consider the case that $X$ is given as the unit circle bundle defined by a smooth degree $d$ line bundle over $\mathbb{C}\mathbb{P}^{1}$.  In this case, one has \cite[Eq. 5.100]{bw},
\begin{equation}\label{etacase}
\eta_{0}=-\frac{d\Delta_{G}}{6}.
\end{equation}  
Plugging \eqref{etacase} into \eqref{phaseshift}, one then has,
\begin{equation}\label{phaseshift2}
e^{i\delta\Psi}=\operatorname{exp}\left(\frac{i\pi\Delta_{G}}{12}(3-d)\right).
\end{equation}
In general, if the framing of $X$ is twisted by $F\in\Z$ units, then $\operatorname{CS}(A^{\operatorname{g}})$ transforms by,
\begin{equation*}
\operatorname{CS}(A^{\operatorname{g}})\rightarrow \operatorname{CS}(A^{\operatorname{g}})+2\pi F.
\end{equation*}
The partition function $Z_{\mathbb{T}}(X,k)$ is then transformed by,
\begin{equation}\label{partform}
Z_{\mathbb{T}}(X,k)\rightarrow Z_{\mathbb{T}}(X,k)\cdot \operatorname{exp}\left(\frac{\pi i N F}{12}\right).
\end{equation}
Thus, \eqref{phaseshift2} shows that our computation has been done in a framing that is shifted by $F=3-d$ units away from the canonical framing. 
Plugging the result for $e(p)$ in \eqref{finep} into \eqref{absteq} we make the following,
\begin{define}\label{symrigdef}
Let $k\in \Z$, and let $X$ be a closed oriented three-manifold that admits a quasi-regular Sasakian structure $(\kappa,\Phi,\xi,\operatorname{g})$, with associated principal bundle structure,
\begin{displaymath}
\xymatrix{\xyC{2pc}\xyR{1pc}\U(1) \ar@{^{(}->}[r] & X \ar[d]\\
                              & \Sigma}.
\end{displaymath}
Define the \emph{symplectic abelian Chern-Simons partition function},
\begin{equation}
\bar{Z}_{\mathbb{T}}(X,k)=\sum_{[P]\in\operatorname{Tors}H^{2}(X,\Lambda)}\bar{Z}_{\mathbb{T}}(X,P,k),
\end{equation}
where,
\begin{equation}\label{prodef2}
\bar{Z}_{\mathbb{T}}(X,P,k)=k^{m_X}e^{i k \operatorname{CS}_{X,P}(A_{P})}e^{-\frac{i\pi}{2}\eta_{0}}\int_{\mathcal{M}_{P}}K_{X}\cdot\omega_{P},
\end{equation}
where, $m_X:=\frac{N}{2}(\operatorname{dim}H^{1}(X;\R)-2\operatorname{dim}H^{0}(X;\R))$, $K_{X}:=\frac{1}{|c_{1}(X)\cdot \prod_{i}\alpha_{i}|^{N/2}}$, $\omega_{P}:=\frac{\left(\sum_{j=1}^{g}d\theta_{j}\wedge d\bar{\theta}_{j}\right)^{gN}}{(gN)!(2\pi)^{2gN}}$.
\end{define}
\begin{rem}
Note that we assumed that $X$ was a principal $\U(1)$ bundle over a smooth surface $\Sigma$ in the above derivation of the partition function.  We accordingly revise the factor $1/\operatorname{Vol}(I)=1/|c_{1}(X)|$ in the partition function to $K_{X}$ in order to restore topological invariance in the case when orbifold points are present.
\end{rem}

\section*{Acknowledgments}
This work represents an extension of part of my Ph.D. thesis and accordingly there are many people whom I would like to thank here.  First and foremost, I would like to take this opportunity to thank my thesis advisor, Lisa Jeffrey, for her patience, wisdom, creativity and for sharing with me her encyclopedic knowledge of mathematics.  This work would not have been possible without her.  Among the many other people from whom I have benefited, I would particularly like to thank J\o rgen Andersen, Dror Bar-Natan, Chris Beasley, John Bland, Vincent Bouchard, Ben Burrington, Stanley Deser, Dan Freed, Benjamin Himpel, Roman Jackiw, Yael Karshon, Eckhard Meinrenken, Rapha\"{e}l Ponge, Fr\'{e}d\'{e}ric Rochon, Michel Rumin, Paul Selick, Nicolai Reshetikhin, Vladimir Turaev, Jonathan Weitsman, and Edward Witten.  This work was partially supported by a National Science and Engineering Research Council of Canada Graduate Scholarship Award, and the Danish National Research Foundation.



%

\end{document}